\newtheorem{lemma}{{\sc Lemma}}
\newtheorem{cor}{{\sc Corollary}}
\newtheorem{theorem}{{\sc Theorem}}
\newtheorem{claim}{{\sc Claim}}
\newtheorem{example}{{\sc Example}}
\newtheorem{definition}{{\sc Definition}}
\newcommand{\tbf}{\textbf}
\newcommand{\sr}{\textcolor{red}}
\newtheorem{remark}{{\sc Remark}}
\newenvironment{proof}{\noindent {\bf \sl \textbf{Proof}\/}:}
{\hfill $\blacksquare{}$ \vspace{12pt}} 
\newenvironment{proofs}{\noindent {\bf \sl Proof\/}:\enspace}
{\hfill $\square{}$ \vspace{12pt}} 
\title{\bf {\Large Characterizing Stability in Many-to-One Matching with Non-Responsive Couples}\thanks{We are especially grateful to Hans Peters, Bettina Klaus, and Ton Storcken for their careful guidance and constructive feedback.}}
\author[1]{Shashwat Khare}
\author[2]{Souvik Roy\footnote{Corresponding author: souvik.2004@gmail.com}}
\affil[1]{Independent Researcher}
\affil[2]{Statistical Sciences Division, Indian Statistical Institute, Kolkata}
\begin{document}
	\maketitle

	\begin{abstract}
		\noindent We study \emph{many-to-one} matching problems between institutions and individuals, where each institution may be matched to multiple individuals. The matching market includes \emph{couples}, who view pairs of institutions as complementary. Institutions’ preferences over sets of individuals are assumed to satisfy \emph{responsiveness}, whereas couples' preferences over pairs of institutions may \emph{violate responsiveness}.

In this setting, we first assume that all institutions share a \emph{common preference ordering} over individuals, and we establish:
\begin{enumerate}
    \item[(i)] a complete characterization of all couples' preference profiles for which a stable matching exists, under the additional assumption that couples violate responsiveness only to ensure co-location at the same institution, and
    \item[(ii)] a necessary and sufficient condition on the common institutional preference such that a stable matching exists when couples may violate responsiveness arbitrarily.
\end{enumerate}

Next, we relax the common preference assumption, requiring institutions to share a common ranking \emph{only over the members of each couple}. Under this weaker assumption, we provide:
\begin{enumerate}
    \item[(i)] a complete characterization of all couples' preferences for which a stable matching exists, and
    \item[(ii)] a sufficient condition on individuals' preferences that guarantees the existence of a stable matching.
\end{enumerate}

	\end{abstract}
	\noindent {\sc Keywords.} many-to-one two-sided matching, stability, responsiveness, togetherness \\
	\noindent {\sc JEL Classification Codes.}  C78, D47

	\section{Introduction}

In various practical settings, centralized matching procedures are employed to assign individuals to institutions on the opposite side of a market. Prominent examples include the matching of lawyers to legal positions in Canada, students to public schools in the United States, and doctors or senior health-care professionals to hospitals in several countries. A matching is said to be pairwise stable if there exists no institution–individual pair that can mutually deviate from the matching in a way that makes both strictly better off relative to their assigned partners.

Roth~\cite{roth1984labor} showed that it is possible to design mechanisms that incentivize only one side of the market to truthfully reveal their preferences. However, the results concerning stability have been more promising. The prevailing view is that stable matchings exist under appropriate domain restrictions. In particular, institutions must regard individuals as substitutes, and individuals must care only about the institutions to which they are matched.
	
Klaus and Klijn~\cite{klaus2005responsive} showed that a stable matching exists at every preference profile when couples' preferences satisfy \emph{responsiveness}. Responsiveness means that a couple is better off when either member is matched to a more preferred institution, keeping the other member's match fixed.\footnote{Roth and Sotomayor~\cite{roth1992two} introduced the \emph{substitutability} condition and showed that it is sufficient to ensure the existence of a stable matching. Hatfield and Milgrom~\cite{hatfield2005matching} showed that the \emph{substitutes} condition, a natural extension of substitutability, is also sufficient for stability. Later, Hatfield and Kojima~\cite{hatfield2010substitutes} showed that the \emph{bilateral substitutes} condition is sufficient for stability and that responsiveness implies bilateral substitutes.}
However, Kojima, Pathak, and Roth~\cite{kojima2010stability} observed that responsiveness is not satisfied in their dataset, as couples exhibit strong preferences to be matched to institutions located within the same geographical area.
	
Roth~\cite{roth1984marriage} was the first to point out that the existence of a stable matching is \emph{not} guaranteed at \emph{every} preference profile in the presence of couples in the labor market. This failure arises because couples may view pairs of jobs as complements, thereby violating the responsiveness condition. This motivates our aim to characterize the set of preference profiles at which a stable matching exists, even when responsiveness is not satisfied.

    We consider a specialized matching problem between a set of hospitals and a set of doctors, which includes some couples. Our focus is on the existence of stable matchings in the presence of couples. We first examine the scenario where all hospitals share a \emph{common preference} over individual doctors. This assumption is justified in contexts where hospitals rank doctors based on scores from a common examination. Hospitals' preferences over sets of doctors are then derived from this common preference using the notion of \emph{responsiveness}.

Responsiveness implies that, for any two allocations to a hospital that differ by exactly one doctor, the hospital prefers the allocation that includes the more preferred doctor (according to its individual preference ordering). Note that multiple responsive preferences over sets of doctors may correspond to a given preference over individual doctors. In our model, we allow different hospitals to adopt different responsive preferences over sets of doctors.
	
Each individual doctor is assumed to have a strict preference ordering over hospitals. The preference of a couple is derived from the individual preferences of its members. We allow couples' preferences to violate responsiveness in a controlled manner to capture their desire to be matched together. In the spirit of Dutta and Massó~\cite{dutta1997stability}, we assume that a couple prefers to be matched to the same hospital rather than to different hospitals.

We show that if all hospitals share a common preference over doctors, then a stable matching exists \emph{if and only if} each couple's preference satisfies responsiveness with respect to the more preferred member of the couple (as ranked by the hospitals' common preference).
	
Next, we consider the scenario where couples have arbitrary preferences over pairs of hospitals. This setting captures situations in which two doctors prefer to avoid being matched to the same or nearby institutions. 
We show that a stable matching exists in this scenario if and only if, for each couple, one of the following conditions holds: 
\begin{enumerate}
    \item [(i)] the members of the couple are ranked consecutively in the common preference ordering of hospitals, or
    \item [(ii)] there is at most one doctor ranked between the members, and one member of the couple is ranked at the bottom of the common preference.
\end{enumerate}

Finally, we relax the assumption that hospitals share a common preference over all individual doctors, requiring instead that they have a common preference only over the members of each couple. Under this assumption, we provide a necessary and sufficient condition on couples' preferences for the existence of a stable matching. 
Subsequently, from the hospitals’ perspective, we establish a sufficient condition on their preferences that guarantees the existence of a stable matching for any couples' preferences, even when couples may violate responsiveness to be matched together.

The remainder of the paper is organized as follows. In Section~2, we formally introduce the model. Section~3 investigates the existence of stable matchings when couples' preferences may violate responsiveness to be matched together, under the assumption that hospitals share a common preference over individual doctors. Section~4 considers the case where couples' preferences are unrestricted, while hospitals still have a common preference. Finally, Section~5 relaxes the common preference assumption for hospitals and provides sufficient conditions for the existence of a stable matching.

	\section{The framework}

We consider a many-to-one matching market between a set of doctors and a set of hospitals. Let $H$ denote the set of hospitals, and define $\bar{H} := H \cup \{\emptyset\}$, where $\emptyset$ represents the unmatched outcome for a doctor. That is, if a doctor is matched to $\emptyset$, they are effectively unmatched, i.e., not assigned to any hospital. Each hospital $h \in H$ has a finite capacity denoted by $\kappa_h \geq 2$.

We denote by $D$ the set of doctors. We assume that 
\[
D = F \cup M \cup S,
\]
where $F$, $M$, and $S$ are pairwise disjoint subsets of doctors. The doctors in $F$ are denoted by $\{f_1, \ldots, f_k\}$, and those in $M$ by $\{m_1, \ldots, m_k\}$, for some $k \in \mathbb{N}$.\footnote{By $\mathbb{N}$, we denote the set of natural numbers $\{1,2,\ldots\}$.} In particular, this means that $F$ and $M$ have the same cardinality. 
The doctors in $F$ and $M$ together form fixed couples, whereas the doctors in $S$ are not part of any couple. We refer to the doctors in $S$ as \emph{single doctors}, and those in $F$ or $M$ as \emph{non-single doctors}. The set of couples is denoted by
\[
C = \{\{f_1,m_1\}, \ldots, \{f_k,m_k\}\},
\]
and a generic couple by $c = \{f,m\}$.
	
Throughout this paper, we assume that $|H| \geq 2$, $|D| \geq 4$, and $|C| \geq 1$. That is, there are at least two hospitals and four doctors, including at least one couple. We further assume that the total number of vacancies across all hospitals equals the total number of doctors, i.e.,
\[
\sum_{h \in H} \kappa_h = |D|.
\]
\footnote{The situation where there are more (or fewer) doctors than the total capacity of hospitals can be addressed by introducing dummy hospitals (or doctors) and suitably modifying preferences over these dummies.}

	An allocation of a couple $c = \{f,m\}$ is an element $(h,h') \in \bar{H}^2$, where hospitals $h$ and $h'$ are matched to doctors $f$ and $m$, respectively. As noted earlier, one or both of $h$ and $h'$ may be $\emptyset$, indicating that the corresponding doctor(s) is(are) unmatched.

For notational convenience, we do not use braces for singleton sets.

	\subsection{Matching}
A matching is an allocation of doctors to hospitals such that the total number of doctors assigned to any hospital does not exceed its capacity, and each doctor is allocated to at most one hospital (i.e., to exactly one hospital or none). Formally, we have the following definition:

\begin{definition}
	A matching on $H \cup D$ is a mapping $\mu: H \cup D \to 2^D \cup \bar{H}$ satisfying:
	\begin{itemize}
		\item[(i)] For all $h \in H$, $\mu(h) \subseteq D$ with $|\mu(h)| \leq \kappa_h$,
		\item[(ii)] For all $d \in D$, $\mu(d) \in \bar{H}$,
		\item[(iii)] For all $d \in D$ and $h \in H$, $\mu(d) = h$ if and only if $d \in \mu(h)$.
	\end{itemize}
\end{definition}

\subsection{Preferences}
In this subsection, we introduce the notion of preferences of hospitals and doctors, along with certain restrictions on these preferences.

For a set $X$, let $\mathbb{L}(X)$ denote the set of linear orders on $X$, i.e., binary relations that are complete, reflexive, transitive, and antisymmetric. An element $R \in \mathbb{L}(X)$ is called a \emph{preference} over $X$, and we denote by $P$ its strict part. Since $R$ is antisymmetric, $x R y$ implies either $x = y$ or $x P y$.

For $P \in \mathbb{L}(X)$ and $k \leq |X|$, define $r_k(P)$ as the $k$-th ranked alternative in $P$, i.e., 
\[
r_k(P) = x \quad \text{if and only if} \quad |\{ y \in X : y R x \}| = k.
\]
Moreover, for $P \in \mathbb{L}(X)$ and $x \in X$, define the \emph{rank} of $x$ in $P$ by 
\[
r(x, P) = k \quad \text{if and only if} \quad r_k(P) = x.
\]
	
	\subsubsection{Preferences of hospitals}	
For any hospital $h \in H$, a preference of $h$ over individual doctors, denoted by $\tilde{P}_h$, is defined as an element of $\mathbb{L}(D \cup \{\emptyset\})$.
We assume that for all $d \in D$ and all $h \in H$, 
\[
d \; \tilde{P}_h \; \emptyset,
\]
meaning that a hospital always prefers having a doctor to a vacant position.

For a hospital $h$, the feasible sets of doctors, given its capacity, is defined as 
\[
\{ D' \subseteq D : |D'| \leq \kappa_h \}.
\]
A preference over feasible sets of doctors for hospital $h$ is an element of 
\[
\mathbb{L}(\{ D' \subseteq D : |D'| \leq \kappa_h \}).
\]
In what follows, we discuss how a preference of a hospital over individual doctors is extended to a preference over feasible sets of doctors by introducing the notion of \emph{responsiveness}.

Responsiveness captures the idea of separability used when extending preferences from individual elements to subsets. Informally, responsiveness states that hospitals always prefer to receive a better doctor (or set of doctors). For example, consider a preference $\tilde{P}_h$ of a hospital over the individual doctors $\{d_1, d_2, d_3, d_4\}$, where 
\[
d_1 \; \tilde{P}_h \; d_2 \; \tilde{P}_h \; d_3 \; \tilde{P}_h \; d_4.
\]
Then responsiveness implies that the pair $(d_1, d_2)$ is preferred to $(d_1, d_3)$, the set $(d_1, d_2, d_4)$ is preferred to $(d_1, d_3, d_4)$, and so forth, in the extension of $\tilde{P}_h$ to feasible sets of doctors. 

It is important to note that responsiveness does not determine how the hospital compares certain sets, for instance, the pairs $(d_1, d_4)$ and $(d_2, d_3)$. Hence, one may have multiple responsive extensions of $\tilde{P}_h$ where either pair could be preferred.

Below, we provide a formal definition of responsive extension.

\begin{definition} 
    Let $h$ be a hospital with capacity $\kappa_h$, and let $\tilde{P}_h$ be a preference of $h$ over individual doctors. A preference $P_h$ of $h$ over feasible sets of doctors satisfies \emph{responsiveness} with respect to $\tilde{P}_h$ if
    \begin{itemize}
        \item[(i)] The restriction of $P_h$ to individual doctors coincides with $\tilde{P}_h$, that is, for all $d, d' \in D \cup \{\emptyset\}$,
        \[
            d \; P_h \; d' \quad \text{if and only if} \quad d \; \tilde{P}_h \; d',
        \]
        
        \item[(ii)] For all $D' \subsetneq D$ and all $D_1, D_2 \subseteq D \setminus D'$ such that $|D' \cup D_1| \leq \kappa_h$ and $|D' \cup D_2| \leq \kappa_h$, we have
        \[
            (D' \cup D_1) \; P_h \; (D' \cup D_2) \quad \text{if and only if} \quad D_1 \; P_h \; D_2.
        \]
    \end{itemize}
\end{definition}

Next, we define the notion of common preference of hospitals over individual doctors. As the name suggests, this simply means that all hospitals share the same preference over individual doctors. Such a preference can be interpreted as a common ranking of doctors based on, for example, grades from a standardized examination. Note that hospitals may differ in how they extend this common preference to sets of feasible doctors.

\begin{definition}
    Let $\{P_h\}_{h \in H}$ be a collection of preferences of hospitals over feasible sets of doctors, and let $P^0_{hp} \in \mathbb{L}(D \cup \{\emptyset\})$. The collection $\{P_h\}_{h \in H}$ is said to satisfy \emph{Common Preference over Individual doctors (CPI)} with respect to $P^0_{hp}$ if, for all $h \in H$, the preference $P_h$ is responsive with respect to $P^0_{hp}$.
\end{definition}

Unless otherwise stated, we assume CPI for every collection of hospitals' preferences. Whenever we consider a collection of preferences satisfying CPI with respect to $P^0_{hp}$, we assume, for ease of presentation, that the indexing of doctors in couples is such that 
\[
f \; P^0_{hp} \; m
\]
for every couple $c = \{f,m\} \in C$, and that the couples in $C = \{\{f_1,m_1\}, \ldots, \{f_k,m_k\}\}$ are indexed so that
\[
m_1 \; P^0_{hp} \; m_2 \; P^0_{hp} \; \ldots \; P^0_{hp} \; m_k.
\]
This assumption is without loss of generality as we consider only one CPI at any given context.

It is worth mentioning that although the above assumption is without loss of generality, the restrictions we impose on the female member of a couple later in the paper are essentially imposed on the ``commonly preferred'' member of a couple. Thus, these restrictions do not pertain to any particular member (e.g., the female) of a couple.
	
	\subsubsection{Preferences of doctors}
Every doctor has a preference over hospitals, including the `empty' hospital $\emptyset$. Thus, a preference $P_d$ of a doctor $d \in D$ is an element of $\mathbb{L}(\bar{H})$, where $\bar{H} = H \cup \{\emptyset\}$. We assume that
\[
h \; P_d \; \emptyset \quad \text{for all } h \in H \text{ and all } d \in D,
\]
i.e., every doctor prefers being matched to some hospital rather than being unemployed.

We now proceed to define the preference of a couple based on the preferences of its individual members.

\paragraph{Preferences of couples} Each couple has a preference over the pairs of hospitals. Thus, a preference $P_c$ of a couple $c$ is an element of $\mathbb{L}(\bar{H}^2)$. Recall that an allocation $(h_1,h_2)$ for a couple $c=\{f,m\}$ means that $f$ is matched with $h_1$ and $m$ is matched with $h_2$. 

For a couple $c=\{f,m\}$ with preference $P_c$, and a hospital $h \in \bar{H}$, the conditional preference of $m$ given $h$, denoted $P_{m|h}$, is defined as the following preference of $m$: for all $h_1,h_2 \in \bar{H}$,  
\[
h_1 P_{m|h} h_2 \iff (h,h_1) P_c (h,h_2).
\]

As discussed earlier, we aim to deviate from responsiveness in a minimal way and study its implications on stability. We assume that a couple's preference is responsive except in situations where both members are matched to the same hospital. For instance, if $f$ prefers $h_1$ to $h_2$ and $m$ prefers $h_2$ to $h_1$, then—contrary to responsiveness which implies that the pair $(h_1,h_2)$ is preferred to both $(h_1,h_1)$ and $(h_2,h_2)$—we allow the couple $\{f,m\}$ to prefer either $(h_1,h_1)$ or $(h_2,h_2)$, or both, to the pair $(h_1,h_2)$. This flexibility reflects the couple’s potential benefit from staying together, which we refer to as a \emph{preference for togetherness}. We still assume that a couple prefers any allocation where both members are matched over one where at least one is unmatched.

To define this formally, we first recall the notion of responsiveness for couples’ preferences. This is structurally identical to responsiveness for hospitals' preferences, but we restate it for clarity.

\begin{definition}\label{def1} 
Let $c = \{f, m\}$ be a couple and suppose $P_f$ and $P_m$ are the individual preferences of $f$ and $m$, respectively. Then, a preference $P_c \in \mathbb{L}(\bar{H}^2)$ of the couple $c$ is called \emph{responsive} with respect to $P_f$ and $P_m$ if, for all $h, h_1, h_2 \in \bar{H}$:
\begin{enumerate}
    \item[(i)] $(h, h_1) P_c (h, h_2)$ if and only if $h_1 P_m h_2$, and
    \item[(ii)] $(h_1, h) P_c (h_2, h)$ if and only if $h_1 P_f h_2$.
\end{enumerate}
\end{definition}

We now define the notion of \emph{responsiveness violated for togetherness (RVT)}.

\begin{definition}\label{def2} 
Let $c = \{f, m\}$ be a couple, and let $P_f$ and $P_m$ be the individual preferences of $f$ and $m$, respectively. A preference $\bar{P}_c \in \mathbb{L}(\bar{H}^2)$ of $c$ satisfies \emph{responsiveness violated for togetherness (RVT)} if there exists a preference $P_c \in \mathbb{L}(\bar{H}^2)$ that is responsive with respect to $P_f$ and $P_m$, such that:
\begin{enumerate}
    \item[(i)] For all $h \in H$ and all $(h_1,h_2) \in \bar{H}^2$, if $(h,h) P_c (h_1,h_2)$, then $(h,h) \bar{P}_c (h_1,h_2)$.
    \item[(ii)] For all $(h,h'), (h_1,h_2) \in \bar{H}^2$ with $h \neq h'$ and $h_1 \neq h_2$, we have $(h,h') P_c (h_1,h_2)$ if and only if $(h,h') \bar{P}_c (h_1,h_2)$.
\end{enumerate}
\end{definition}

\noindent
Note that RVT implies that couples' preferences may violate responsiveness only in favor of being together at the same hospital. Furthermore, by taking $h_1 = h_2$ in Condition (i) of Definition~\ref{def2}, it follows that for all $h, h' \in H$, $(h,h) P_c (h',h')$ if and only if $(h,h) \bar{P}_c (h',h')$.

\subsubsection{Preference profiles and matching problems}

A \emph{preference profile} is a collection of preferences for all the doctors in $D$, all the couples in $C$, and all the hospitals in $H$, where the preferences of hospitals are assumed to be responsive. Thus, a preference profile, denoted by $\utilde{P}$, is a collection of preferences
\[
\utilde{P} = \left( \{\utilde{P}_d\}_{d \in D}, \{\utilde{P}_c\}_{c \in C}, \{\utilde{P}_h\}_{h \in H} \right),
\]
where for each $d \in D$, $\utilde{P}_d$ is a preference of doctor $d$ over $\bar{H}$, for each $c \in C$, $\utilde{P}_c$ is a preference of couple $c$ over $\bar{H}^2$, and for each $h \in H$, $\utilde{P}_h$ is a responsive preference over the feasible subsets of doctors, i.e., subsets of $D$ of size at most $\kappa_h$.

A \emph{matching problem} consists of a set of hospitals with their corresponding capacities, a set of doctors partitioned into the sets $F$, $M$, and $S$, and a preference profile.

Throughout the paper, we adhere to the following notational convention: whenever we refer to a given or fixed collection of preferences of hospitals or couples in any context, we use the superscript $0$. For instance, we use the notation $P^0_{hp}$ to denote a common preference over individual doctors (CPI), and later we will use $P^0_H$ and $P^0_C$ to denote a given collection of preferences of hospitals and couples, respectively.

\subsection{Stability}

Our model is formally equivalent to a many-to-many matching market, as each couple seeks two positions and each hospital has a capacity of at least two. Therefore, multiple notions of stability can be defined, depending on the types of permissible blocking coalitions.\footnote{See Roth~\cite{roth1984labor}, Roth~\cite{roth1984marriage}, Konishi and \"{U}nver~\cite{konishi2006groups}, and Echenique and Oviedo~\cite{echenique2006stability} for various alternative notions of stability in many-to-many matching problems.}

Blocking pairs can consist of either a hospital and a single doctor, or a pair of hospitals and a couple.

We say that a hospital $h$ is \emph{interested} in a set of doctors $D'$ at a matching $\mu$ if there exists $D'' \subseteq \mu(h)$ such that $(\mu(h) \setminus D'') \cup D' \, P_h \, \mu(h)$. In other words, a hospital is interested in a new set of doctors at a given matching if it would strictly prefer to replace a subset of its currently assigned doctors with that new set, subject to capacity constraints.

Similarly, we say that a doctor $d$ (or a couple $c$) is \emph{interested} in a hospital $h$ (or a pair $(h, h')$) at matching $\mu$ if $h \, P_d \, \mu(d)$ (respectively, $(h, h') \, P_c \, \mu(c)$). Note that if a hospital is interested in a doctor or set of doctors, or if a doctor (or couple) is interested in a hospital (or pair of hospitals), then they are not already matched together at $\mu$.

We now define the notion of an (individual) blocking pair consisting of a hospital and a single doctor.

\begin{definition}
Let $s$ be a single doctor, $h$ a hospital, and $\mu$ a matching. We say that $(h, s)$ \emph{blocks} $\mu$ if both $h$ and $s$ are interested in each other at $\mu$.
\end{definition}

That is, a hospital and a single doctor form a blocking pair if they are not matched together under $\mu$ but strictly prefer to be matched with each other instead.
	
Next, we define the notion of blocking between a pair of hospitals and a couple. A pair of hospitals and a couple, who are not already matched, block a matching if the couple prefers to be matched with that pair of hospitals, and the hospitals from that pair who are getting a new doctor from the couple are interested in getting it. Thus, the crucial thing here is that one of the members of the blocking couple might already be matched with one of the hospitals in the blocking pair. In that case, the other hospital must be interested in getting the other member of the couple. One might think that this case can be captured by our notion of (individual) blocking between the `other hospital' and the `other doctor'. Firstly, note that we have such notion of blocking only between hospitals and single doctors. Secondly, even if we define the notion of blocking between arbitrary (not necessarily single) individual doctor and hospital, that would not capture this situation as the other doctor might not be interested in the other hospital according to his/her individual preference but can be interested according to his/her couple preference.

\begin{definition}
For a couple $c=\{f,m\}$, a pair of hospitals $(h_f,h_m)$, and a matching $\mu$, we say $((h_f,h_m),c)$ blocks $\mu$ if $c$ is interested in $(h_f,h_m)$ at $\mu$, and
\begin{enumerate}
    \item[(i)] if $h_x \neq h_y$ and $\mu(x) \neq h_x$ for all $x \in \{f,m\}$, then $h_f$ is interested in $f$ and $h_m$ is interested in $m$,
    \item[(ii)] if $h_x \neq h_y$ and $h_x = \mu(x)$ and $h_y \neq \mu(y)$ for $x,y \in \{f,m\}$, then $h_y$ is interested in $y$,
    \item[(iii)] if $h_f = h_m = h$, then $h$ is interested in $\{f,m\}$.
\end{enumerate}
\end{definition}

It is worth mentioning that the blocking notion takes complementarity of a couple being accepted into account (by allowing the notion of a hospital being interested in a couple), but it does not take the couple into account when accepting single doctors and possibly removing members of a couple. In other words, there is an asymmetry here.

We consider this asymmetry in our model since it is not practical for big institutions like hospitals to consider the possibility of losing a member of a couple while removing the other member. This is because this possibility depends on factors like which hospital the removed member will join, whether the couple prefers to be together in that hospital, etc. Clearly, such situations can only be modeled by using a farsighted notion of blocking, which would complicate the model considerably.

\begin{definition}
A matching $\mu$ is \emph{stable} if it cannot be blocked.
\end{definition}

\begin{remark}
By our assumption that each hospital finds each doctor acceptable and each doctor finds each hospital acceptable, every stable matching is \emph{individually rational}.
\end{remark}

\begin{remark}\label{rem_cop}
To ease the presentation, for a couple $c=\{f,m\} \in C$ and hospitals $h_f,h_m \in H$, whenever a matching $\mu$ is blocked by $((h_f,h_m),c)$ where one of the members $x \in \{f,m\}$ of the couple was already in the corresponding hospital $h_x$ (that is, $h_x = \mu(x)$), we simply say that $\mu$ is blocked by the other pair $(h_y,y)$, where $y \neq x \in \{f,m\}$.
\end{remark}

\subsection{Two well-known algorithms for matching}

In this section, we present a well-known algorithm called the \emph{doctor proposing deferred acceptance algorithm (DPDA)}.\footnote{This directly follows from the well-known algorithm given by Gale and Shapley~\cite{gale1962college}.} However, for our purpose, we modify this algorithm slightly. We use this modified algorithm to match hospitals with doctors. In what follows, we give a short description of DPDA, where each doctor $d$ has a preference $P_d$ over hospitals and each hospital $h$ has a preference $P_h$ over feasible sets of doctors.

\medskip
\noindent \textbf{DPDA}: This algorithm has multiple stages. In stage 1, each doctor $d \in D$ proposes to his/her most preferred hospital according to $P_d$. Each hospital $h \in H$ provisionally accepts the most preferred collection of doctors according to $P_h$. If a hospital $h$ receives more than $\kappa_{h}$ proposals, then it keeps its most preferred $\kappa_h$ doctors from these proposals and rejects all others. 

Having defined stages $1,\ldots,k$, the stage $k+1$ is defined in the following way: Each unmatched (till stage $k$) doctor $d$ proposes to his/her most preferred hospital from the remaining set of hospitals who have not rejected him/her in any of the earlier stages. 

If a hospital, whose provisional list of accepted doctors is less than its capacity, receives one or more fresh proposals, then it continues to add to its accepted list (till its capacity). However, if a hospital $h$, whose provisional list of doctors is equal to its capacity, receives one or more fresh proposals from more preferred doctors, then it accepts these fresh proposals by rejecting the same number of relatively worse (according to $P_h$) doctors that it provisionally accepted earlier. 

The algorithm finally terminates when each doctor is either matched with some hospital or has been rejected by all hospitals.

\begin{remark}
In DPDA, each individual doctor proposes according to his/her individual preference. Therefore, couples do not play any role in it.
\end{remark}

Now, we present another well-known algorithm called the \emph{serial dictatorship algorithm (SDA)}. We give a short description of SDA where hospitals' preferences satisfy CPI with respect to $P^0_{hp}$. Recall that unless otherwise mentioned, we assume that hospitals' preferences satisfy the CPI property. That is, they have a common ranking, denoted by $P^0_{hp}$, over individual doctors.

\medskip
\noindent \textbf{SDA}: In SDA, the highest-ranked doctor according to $P^0_{hp}$ chooses his/her most-preferred hospital, and in general, the $j$-th ranked doctor according to $P^0_{hp}$ chooses his/her most preferred hospital among the hospitals with available vacancies after all the better (with rank less than $j$) doctors have made their choices.

\medskip
Our next remark is a standard result in matching theory.

\begin{remark}\label{rm:rm1}
Both DPDA and SDA produce the same matching when hospitals' preferences satisfy CPI.
\end{remark}
	
	\section{Non-Existence of stable matching under RVT preferences}\label{sec_3}

In this section, we explore the possibility of having a stable matching when couples' preferences satisfy the \emph{Responsiveness Violated for Togetherness (RVT)} property—that is, they are allowed to violate responsiveness in order to stay together. First, we show through an example that even when hospitals' preferences satisfy CPI, a stable matching is not guaranteed under RVT.

\begin{example}\label{exnew}\normalfont
Suppose there are two hospitals, each with capacity 2, two single doctors, and a couple (formed by two other doctors). Formally, let $H = \{h_1, h_2\}$ with $\kappa_{h_1} = \kappa_{h_2} = 2$, and let the set of doctors be $D = \{s_1, s_2, f, m\}$ where $c = \{f, m\}$ is the only couple. Consider the preference profile given in Table~\ref{tb1}. Here, both hospitals share a common preference over individual doctors, denoted by $P^0_{hp}$.
We do not present the hospitals' preferences over feasible sets of doctors because that does not play a role in this example.

\begin{table}[htbp]
\centering
\begin{tabular}{c c c c c c}
\hline
$P^0_{hp}$ & $P_{s_1}$ & $P_{s_2}$ & $P_f$ & $P_m$ & $P_c$ \\
\hline
$f$ & $h_2$ & $h_1$ & $h_2$ & $h_1$ & $(h_1,h_1)$ \\
$s_1$ & $h_1$ & $h_2$ & $h_1$ & $h_2$ & $(h_2,h_1)$ \\
$s_2$ &       &       &       &       & $(h_2,h_2)$ \\
$m$ &       &       &       &       & $(h_1,h_2)$ \\
\hline
\end{tabular}
\caption{Preference profile in Example~\ref{exnew}.}
\label{tb1}
\end{table}

Note that the couple's preference $P_c$ violates responsiveness in favor of togetherness: the pair $(h_1,h_1)$ is preferred over $(h_2,h_1)$ even though $f$ prefers $h_2$ and $m$ prefers $h_1$ individually.

We now show that no stable matching exists at this preference profile. Suppose, for the sake of contradiction, that there exists a stable matching $\mu$. Since the couple prefers being matched to some pair of hospitals over remaining unmatched, both $f$ and $m$ must be matched under $\mu$.
We now exhaustively consider all possible allocations for the couple and show that each leads to a blocking pair.

\begin{itemize}
    \item[(i)] \textbf{Suppose $\mu(c) = (h_1,h_1)$.} \\
    Since $h_1 \succ_{s_2} h_2$ and $s_2 \succ_{P^0_{hp}} m$, the pair $(h_1, s_2)$ blocks $\mu$.

    \item[(ii)] \textbf{Suppose $\mu(c) = (h_2,h_1)$.} \\
    Because $h_2 \succ_{s_1} h_1$ and $s_1 \succ_{P^0_{hp}} s_2 \succ_{P^0_{hp}} m$, it must be that $\mu(s_1) = h_2$. Moreover, by responsiveness, $\{f, m\} \succ_{h_1} \{s_2, m\}$. Combined with the fact that $(h_1,h_1) \succ_c (h_2,h_1)$, this implies that $((h_1,h_1), c)$ blocks $\mu$.

    \item[(iii)] \textbf{Suppose $\mu(c) = (h_2,h_2)$.} \\
    Since $h_2 \succ_{s_1} h_1$ and $s_1 \succ_{P^0_{hp}} m$, the pair $(h_2, s_1)$ blocks $\mu$.

    \item[(iv)] \textbf{Suppose $\mu(c) = (h_1,h_2)$.} \\
    Again, $h_2 \succ_{s_1} h_1$ and $s_1 \succ_{P^0_{hp}} s_2 \succ_{P^0_{hp}} m$ imply that $\mu(s_1) = h_2$. By responsiveness, $\{f, m\} \succ_{h_2} \{s_1, m\}$. Since $(h_2,h_2) \succ_c (h_1,h_2)$, we conclude that $((h_2,h_2), c)$ blocks $\mu$.
\end{itemize}

Since the four cases above exhaust all possible matchings for the couple $c$, it follows that no stable matching exists for the preference profile in Table~\ref{tb1}.
\end{example}
	
\section{Existence of stable matchings when couples' preferences satisfy RVT}\label{RV}

Given that the existence of stable matchings is not guaranteed when couples are allowed to violate responsiveness for togetherness, we investigate additional conditions on couples' preferences under which existence is ensured.

Let 
\[
P^0_{C} = \big(\{P^0_d\}_{d \in D \setminus S}, \{P^0_c\}_{c \in C}\big)
\]
be a given collection of preferences of non-single doctors (i.e., doctors in $D \setminus S$) and couples such that for every $c \in C$, $P^0_c$ satisfies the RVT property.

An \emph{extension} of $P^0_C$ refers to any preference profile where 
\begin{enumerate}
    \item [(i)] the preferences of non-single doctors and couples are exactly as given in $P^0_C$, and
    \item [(ii)] hospitals’ preferences satisfy the Common Preference over Individuals (CPI) property with respect to some fixed preference $P^0_{hp}$ over individual doctors.
\end{enumerate}

Recall that whenever hospitals’ preferences over feasible sets of doctors satisfy CPI with respect to $P^0_{hp}$, we assume
\[
f_i \;P^0_{hp}\; m_i \quad \text{for each couple } \{f_i,m_i\}.
\]

We now introduce a key condition called the \emph{Responsive for $F$} (RF) property, which describes a natural restriction on couples’ preferences guaranteeing the existence of stable matchings. Intuitively, RF requires that couples' preferences remain responsive with respect to the $f$-member of the couple. More precisely, if a couple moves together from a pair of hospitals $(h', h)$ to the pair $(h,h)$ (i.e., both members matched to the same hospital $h$), then the $f$-member strictly prefers hospital $h$ over $h'$. In other words, when a couple compromises to be together at a single hospital, the member $m$ always makes the concession, while $f$’s preferences remain responsive.

It is important to note that the RF property is not inherently about gender; it is simply a labeling convention that the member $f$ is the one whose preferences remain responsive, and $m$ is the member for whom responsiveness may be violated. Hence, the property ensures that responsiveness is preserved with respect to the “more preferred” member of the couple, which we denote by $f$.

\begin{definition}
A collection of preferences $P^0_C$ satisfies the \emph{Responsive for $F$} (RF) property if for every couple $c = \{f,m\} \in C$ and for all hospitals $h, h' \in H$, the following implication holds:
\[
(h,h) \;P^0_c\; (h',h) \implies h \;P^0_f\; h'.
\]
\end{definition}

We are now ready to state our main theorem which characterizes when stable matchings exist at every possible extension of $P^0_C$. Specifically, it states that a stable matching exists for every extension if and only if $P^0_C$ satisfies the RF property. For clarity, we present the two directions of the theorem separately.

\begin{theorem}\label{theo3}
\begin{enumerate}
    \item [(i)] If $P^0_C$ satisfies the RF property, then a stable matching exists at every extension of $P^0_C$.
    \item [(ii)] If $P^0_C$ does not satisfy the RF property, then there exists at least one extension of $P^0_C$ for which no stable matching exists.
\end{enumerate}
\end{theorem}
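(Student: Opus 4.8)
The plan is to prove the two directions separately: for part~(ii), by generalizing the obstruction in Example~\ref{exnew}; for part~(i), by producing, for an arbitrary extension of $P^0_C$, an explicit stable matching via a \emph{couples-adjusted serial dictatorship}.

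For part~(i) I would run the following variant of SDA. Process the doctors in the order $P^0_{hp}$; a single doctor picks his most preferred currently available hospital. When $f_i$ is reached he picks his $P^0_{f_i}$-best available hospital, recorded as $a_i$ (well defined, since $f_i\,P^0_{hp}\,m_i$ means $f_i$ is processed before $m_i$). When $m_i$ is reached, let $b_i$ be his $P^0_{m_i}$-best available hospital: if $a_i$ still has a vacancy and $(a_i,a_i)\,P^0_{c_i}\,(a_i,b_i)$ then $m_i$ picks $a_i$ (the couple reunites), otherwise $m_i$ picks $b_i$. Since $\sum_h\kappa_h=|D|$ every doctor is matched, so the output $\mu$ is individually rational, and it remains to exclude blocking pairs. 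No hospital--single pair blocks, by the standard SDA argument: if $h\,P_s\,\mu(s)$ then $h$ was already full at $s$'s turn, hence occupied only by doctors ranked above $s$ in $P^0_{hp}$, so $h$ is not interested in $s$. For a couple block $((h_f,h_m),c_i)$ I would first use RVT to reduce every comparison the couple makes either to the responsive part $P_{c_i}$ (when both coordinates are distinct hospitals) or to one in which a diagonal pair $(h,h)$ has merely been moved up; combined with the observation that any hospital $f_i$ strictly prefers to $a_i$ was full at $f_i$'s turn and is therefore occupied only by doctors ranked above $f_i$ (hence above $m_i$), this eliminates every block except the ``togetherness'' ones where the couple wants $(h,h)$ for some $h$ it does not already occupy. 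Here RF is what is needed: since $f_i$ could have chosen any hospital still vacant at his turn, $a_i\,P^0_{f_i}\,h$ for every such $h$, and RF then forces $(a_i,h)\,P^0_{c_i}\,(h,h)$, which together with the reduction above and the definition of $b_i$ is meant to contradict the couple strictly preferring $(h,h)$ to $\mu(c_i)$.

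For part~(ii), fix a couple $c^{\ast}=\{f^{\ast},m^{\ast}\}$ and hospitals $h,h'$ with $(h,h)\,P^0_{c^{\ast}}\,(h',h)$ and $h'\,P^0_{f^{\ast}}\,h$ (necessarily $h\neq h'$). I would build an extension that embeds Example~\ref{exnew} on $\{h,h'\}$: choose $P^0_{hp}$ with $f^{\ast}$ ranked high and $m^{\ast}$ ranked low enough, with two auxiliary doctors placed between them whose preferences point respectively at $h$ and at $h'$; park the remaining doctors and hospitals so that they are matched among themselves in a way that cannot interact with $\{h,h'\}$ or with $c^{\ast}$; and pick each hospital's responsive extension of $P^0_{hp}$ as required (in particular so that $h$ prefers $\{f^{\ast},m^{\ast}\}$ to the pair of auxiliary doctors). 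One then checks, going over all allocations $\mu(c^{\ast})$, that every matching is blocked: when $c^{\ast}$ is together at $h$ or $h'$, an auxiliary doctor ranked above $m^{\ast}$ and desiring that hospital forms a blocking pair with it (cases~(i),(iii) of Example~\ref{exnew}); when $c^{\ast}$ is split, $P^0_{hp}$ pins the auxiliary doctors to particular hospitals, after which responsiveness of the relevant hospital together with $(h,h)\,P^0_{c^{\ast}}\,(h',h)$ produces a couple block into $(h,h)$ (cases~(ii),(iv)).

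The step I expect to be the main obstacle is the stability verification in part~(i): one must show that \emph{no} blocking configuration survives, and the delicate cases are precisely the togetherness couple-blocks into a hospital the couple does not currently occupy, where RF must interact correctly with the serial-dictatorship order and with the freedom left in the hospitals' responsive extensions — this is where the exact strength of RF is being exploited, and where a careful and exhaustive case split is unavoidable. A secondary difficulty is making the part~(ii) construction robust to the fixed hospital capacities and the fixed set of single doctors, which may force a separate sub-case according to $m^{\ast}$'s ranking of $h$ and $h'$ and to how many single doctors are available to play the auxiliary roles.
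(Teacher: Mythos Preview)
Your approach is essentially the same as the paper's. For part~(i), the paper's Algorithm~1 has $m_i$ propose directly according to the conditional preference $P^0_{m_i\mid a_i}$ --- which is equivalent to your two-branch ``reunite or pick $b_i$'' rule under RVT, since $P^0_{m_i\mid a_i}$ agrees with $P^0_{m_i}$ off the diagonal --- and organizes the stability check as a clean two-step reduction (first reduce any block $((h_1,h_2),c)$ to $((\mu(f),h_2),c)$ using RF and RVT, then show $h_2$ is not interested in $m$ because it was already full with higher-ranked doctors when $m$ proposed); for part~(ii) the paper's construction is exactly the embedding of Example~\ref{exnew} you outline, placing two auxiliary doctors $d_1,d_2$ strictly between $f^\ast$ and $m^\ast$ in $P^0_{hp}$ and filling all hospitals other than $h,h'$ to capacity with higher-ranked doctors.
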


The proof of this theorem is provided in the Appendix \ref{proof1}.
		
\section{Existence of stable matchings when couples' preferences are unrestricted} \label{CP}	

In Section~\ref{RV}, we considered the case where couples may violate responsiveness solely for the purpose of being matched together at a single hospital and provided a necessary and sufficient condition for the existence of stable matchings in such settings. In this section, we go beyond the RVT restriction and allow for arbitrary violations of responsiveness in couples' preferences. In other words, we assume that a couple may have any joint preference over pairs of hospitals, regardless of the individual preferences of its members.

Note that in our model, a couple need not represent a traditional pair (e.g., a married couple). Instead, it may represent any pair of doctors with a joint preference over hospital placements. For example, we may consider two competitive or jealous individuals who prefer to be apart. This justifies our consideration of fully unrestricted couples' preferences.

In this general setting, the existence of a stable matching is no longer guaranteed. However, we show that a certain strengthening of the CPI condition on hospitals' preferences is both necessary and sufficient to ensure existence.

We introduce the notion of \emph{strong CPI (SCPI)}, which intuitively requires that the members of any couple be ranked “close enough” to each other in the common ranking $P^0_{hp}$ used by all hospitals. More precisely, SCPI requires the following:
\begin{enumerate}
    \item [(i)] If the $m$-member of a couple is not the least-ranked doctor in $D$ under $P^0_{hp}$, and if there are enough doctors to fill the capacity of at least one hospital, then the members of the couple must be ranked consecutively.
    \item [(i)] Otherwise, at most one doctor may be ranked between them.
\end{enumerate}

Formally, let us define this property. Recall that whenever hospitals’ preferences are assumed to satisfy CPI with respect to some $P^0_{hp}$, we assume $f \;P^0_{hp}\; m$ for any couple $\{f,m\}$. Also, recall that we write $r(d, P^0_{hp}) = k$ to denote that doctor $d$ has rank $k$ in $P^0_{hp}$.

\begin{definition}\label{def12}
Let hospitals' preferences satisfy CPI with respect to $P^0_{hp}$. We say that the hospitals’ preferences satisfy \emph{strong CPI (SCPI)} if for every couple $c = \{f, m\} \in C$:
\begin{enumerate}
    \item [(i)] If $r(m, P^0_{hp}) \neq |D|$, then either
    \[
    |\{ d \in D : f \;P^0_{hp}\; d \;P^0_{hp}\; m \}| = 0
    \quad \text{or} \quad
    |\{ d \in D : d \;P^0_{hp}\; m \}| < \kappa_h \text{ for all } h \in H.
    \]
    
    \item [(ii)] If $r(m, P^0_{hp}) = |D|$, then
    \[
    |\{ d \in D : f \;P^0_{hp}\; d \;P^0_{hp}\; m \}| \leq 1.
    \]
\end{enumerate}
\end{definition}

\medskip

We say a preference $P_c$ of a couple $c \in C$ is \emph{unrestricted} if it is any arbitrary element of $\mathbb{L}(\bar{H}^2)$ that satisfies the minimal requirement that the couple prefers any assignment in which both members are matched to hospitals over any assignment where at least one member is unmatched.

Now suppose that hospitals' preferences satisfy CPI with respect to some fixed preference $P^0_{hp}$ over individual doctors. We introduce the notion of \emph{extensions} of $P^0_{hp}$ to full preference profiles:
\begin{itemize}
    \item An \emph{extension} of $P^0_{hp}$ refers to any preference profile in which hospitals' preferences satisfy CPI with respect to $P^0_{hp}$.
    \item An \emph{RVT extension} of $P^0_{hp}$ is an extension where couples’ preferences satisfy RVT.
    \item An \emph{unrestricted extension} of $P^0_{hp}$ is an extension where couples’ preferences are unrestricted.
\end{itemize}

We now present our main result for this section, showing that SCPI is sufficient to guarantee the existence of a stable matching under any unrestricted extension.

\begin{theorem}\label{th2}
Let hospitals’ preferences satisfy CPI with respect to $P^0_{hp}$. If $P^0_{hp}$ satisfies SCPI, then a stable matching exists at every unrestricted extension of $P^0_{hp}$.
\end{theorem}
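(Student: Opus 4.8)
The plan is to exhibit a stable matching directly, by a serial-dictatorship-type construction driven by the common ranking $P^0_{hp}$. Since, by Remark~\ref{rm:rm1}, the serial dictatorship and the doctor-proposing deferred acceptance algorithm coincide under CPI, it is natural to build the candidate matching $\mu^\ast$ by processing the doctors in $P^0_{hp}$-order: a single doctor, when reached, is assigned its favorite hospital among those still having a vacancy; a couple $c=\{f,m\}$, when reached, is treated as one agent and assigned its $P_c$-favorite pair $(h_f,h_m)$ that is jointly feasible given the vacancies still available (two vacancies in $h$ being required when $h_f=h_m=h$). Because SCPI forces the two members of any couple to be (almost) consecutive in $P^0_{hp}$, a couple is processed at the slot of one of its two members, the partner's slot being skipped; the one regime in which this has to be refined is clause~(ii) of Definition~\ref{def12}, where $m$ is the $P^0_{hp}$-least doctor and a single doctor $d$ may sit between $f$ and $m$, so that the triple $\{f,d,m\}$ must be placed jointly.

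I would then check that $\mu^\ast$ has no blocking pair. For an individual blocking pair $(h,s)$ with $s$ a single doctor, the argument is the classical one: if $s$ strictly prefers $h$ to $\mu^\ast(s)$, then $h$ was already full when $s$ was processed, so every doctor in $\mu^\ast(h)$ was placed before $s$ and hence ranks above $s$ under $P^0_{hp}$; responsiveness of $P_h$ with respect to $P^0_{hp}$ then implies $h$ is not interested in $s$, a contradiction. The only subtlety is that a couple member may be processed out of rank order (its partner's slot was skipped), but the near-consecutiveness guaranteed by SCPI ensures that no doctor ranked below a single doctor $s$ can end up in a hospital $s$ could have entered, so the classical argument goes through.

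For a couple blocking pair $((h_f,h_m),c)$, the point of departure is that $c$ received its $P_c$-favorite jointly feasible pair when processed, so $(h_f,h_m)$ must have failed to be jointly feasible at that moment; this pins down the offending hospital as (nearly) full, and --- again by near-consecutiveness --- full of doctors ranked above both members of $c$, after which responsiveness forbids that hospital from being interested in $f$, in $m$, or in $\{f,m\}$ (one treats the three clauses of the couple-blocking definition separately, using in clause~(iii) that $m$, being $P^0_{hp}$-least in the hard regime, has nobody below it). In the ``near-top'' branch of SCPI clause~(i) one argues instead that, when $c$ is processed, every hospital still has at least two vacancies, so $c$ obtains its unconditionally most preferred pair and cannot block at all. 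The main obstacle is clause~(ii): the single doctor $d$ wedged between $f$ and $m$ can profitably displace $m$, and a hospital holding $d$ can profitably swap it for $f$, so a careless placement of $\{f,d,m\}$ creates a blocking pair through $d$ or through $c$. The resolution is to place these three jointly --- giving $d$ its best hospital among the at most three seats the triple will occupy, and putting $f$ so that $d$ never sits below $f$ in a hospital $d$ would rather leave --- and to verify, using that $m$ is $P^0_{hp}$-least, that $d$ is unique, and that every $\kappa_h\le|D|-2$, that such a placement always exists. This last, finite but intricate, case analysis is the technical core of the argument.
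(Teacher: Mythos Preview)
Your approach is essentially the paper's: a serial dictatorship driven by $P^0_{hp}$ in which singles pick their best available hospital, couples are processed as units and receive their $P_c$-best jointly feasible pair, and the sandwiched-single configuration of SCPI clause~(ii) is handled by a bespoke placement of the final triple $\{f,d,m\}$.  The paper additionally partitions couples into ``near-top'' ($C_1$) and the rest ($C_2$) and matches the members of a $C_1$ couple separately in rank order, but since such a couple obtains its unconstrained top pair either way this is only a presentational difference from your unit-processing.

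Where your sketch substantively diverges is the clause-(ii) placement rule.  You propose to give the sandwiched single $d$ its favourite among the remaining seats and then slot in $f$ suitably; the paper instead sends $m$ to $d$'s \emph{least}-preferred hospital $h'$ in the set $H'$ of hospitals with a vacancy, then $f$ to the hospital $h\in H'$ maximizing $(h,h')$ in $P_c$, and finally $d$ to whatever seat is left.  Your rule can fail: if $d$ grabs a one-vacancy hospital $h$ that the couple covets for $f$, then $((h,\mu(m)),c)$ blocks, because $f\,P^0_{hp}\,d$ makes $h$ willing to trade $d$ out for $f$.  The paper's rule sidesteps exactly this by anchoring $m$ --- the only doctor $d$ outranks --- at $d$'s worst option, so $d$ never envies $m$'s seat, while any seat currently holding $f$ is one its hospital will not relinquish for $d$.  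Apart from this point your outline tracks the paper's argument.
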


The proof of this theorem is provided in Appendix \ref{proof2}.

We now examine the converse of Theorem~\ref{th2}. It shows that if hospitals’ preferences satisfy CPI but \emph{violate} SCPI, then there exists at least one RVT extension of $P^0_{hp}$ in which no stable matching exists. This is a stronger claim than the direct converse, which would require unrestricted preferences. Thus, in order to witness nonexistence of a stable matching, we do not even need to consider the full generality of unrestricted couples' preferences—RVT suffices.

\begin{theorem}\label{th3}
Let hospitals’ preferences satisfy CPI with respect to $P^0_{hp}$. If $P^0_{hp}$ does not satisfy SCPI, then there exists an RVT extension of $P^0_{hp}$ at which no stable matching exists.
\end{theorem}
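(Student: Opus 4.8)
The plan is to negate SCPI, pin down one couple that witnesses the failure, and embed a small ``togetherness cycle'' --- the one from Example~\ref{exnew} --- into the given market, pushing every other doctor and couple into fixed positions so that nonexistence reduces to a four-case check on that couple.

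I would first fix a couple $c=\{f,m\}$ for which SCPI fails and split into the two cases of Definition~\ref{def12}: \textbf{(A)} $r(m,P^0_{hp})=|D|$ and at least two doctors lie strictly between $f$ and $m$ in $P^0_{hp}$; \textbf{(B)} $r(m,P^0_{hp})\neq|D|$, at least one doctor lies strictly between $f$ and $m$, and some hospital $h^\ast$ has $\kappa_{h^\ast}\le|\{d\in D:d\,P^0_{hp}\,m\}|$. In each case I would designate two hospitals $h_1,h_2$ and a small set of \emph{active} doctors --- the couple together with two helper doctors ranked strictly between $f$ and $m$ (case A), or together with one such helper and one doctor ranked below $m$ (case B) --- and give the active participants preferences patterned on Example~\ref{exnew}: $f$ and $m$ disagree over $\{h_1,h_2\}$, the helpers point one to $h_1$ and one to $h_2$, and $\bar P_c$ is the RVT preference that keeps $(h_1,h_1)$ on top but orders the split allocations of $c$ so that from any split the couple strictly wants to re-merge. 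One must check, as in the discussion after Definition~\ref{def2}, that this $\bar P_c$ is genuinely RVT --- moving a $(h,h)$ outcome up past a split outcome is exactly what RVT permits.

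Next I would \emph{freeze} everything outside the gadget: partition the remaining doctors so that $\kappa_h$ of them are assigned to each hospital $h\notin\{h_1,h_2\}$ and the leftover capacities of $h_1$ and $h_2$ are filled, give each frozen doctor a preference topped by its designated hospital, and give every couple other than $c$ a responsive preference consistent with this assignment. The crucial claim is that in \emph{every} stable matching each frozen doctor sits at its designated hospital; this follows from individual rationality together with a serial-dictatorship-style argument descending the common ranking $P^0_{hp}$ (in the spirit of Remark~\ref{rm:rm1}) and capacity counting --- if a frozen doctor were displaced, its target hospital would have to be full of $P^0_{hp}$-better doctors, and descending $P^0_{hp}$ this cascades into impossibility. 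In case (B) this is also where the hypothesis $\kappa_{h^\ast}\le|\{d:d\,P^0_{hp}\,m\}|$ enters: it forces the active hospital playing the role of $h^\ast$ to be full of doctors ranked above $m$ in every stable matching, which is precisely what makes $c$'s split allocation there irreparable and its co-located allocation at the other hospital vulnerable.

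Conditioning on the frozen part, any stable matching then restricts to a matching of $\{f,m\}$ and the helpers over $\{h_1,h_2\}$ with their leftover slots, and I would rule out each of the allocations $(h_1,h_1),(h_1,h_2),(h_2,h_1),(h_2,h_2)$ of $c$ exactly as in Example~\ref{exnew}: a co-located allocation $(h,h)$ is blocked by the hospital--helper pair $(h,\text{helper})$, since the helper is $P^0_{hp}$-above $m$, wants $h$, and is not at $h$; a split allocation is blocked by $c$ together with a hospital via case~(iii) of the pair-blocking definition, because responsiveness lets that hospital swap a helper out for $f$. When a helper turns out to be a member of some other couple (forced when $S$ is too small), the blocking pair is read off through Remark~\ref{rem_cop}, and that couple's responsive preference is chosen to let the swap go through. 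I expect the freezing step to be the main obstacle --- controlling all stable matchings of the full market under arbitrary capacities, and in case (B) extracting the right structural consequence of the counting hypothesis, so that the bad gadget is genuinely forced; the four-case check itself is essentially Example~\ref{exnew} repeated.
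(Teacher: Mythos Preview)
Your proposal is correct and follows essentially the same route as the paper: the same two-way case split on how SCPI fails, the same Example~\ref{exnew} gadget for $c$ over two designated hospitals with a helper between $f$ and $m$ (plus either a second in-between helper or one below $m$), freezing all other doctors to their top hospitals with responsive preferences for the remaining couples, a claim that in any stable matching the active doctors are confined to $\{h_1,h_2\}$, and then the four-case elimination on $\mu(c)$. One small correction: in your case~(B) the capacity hypothesis $\kappa_{h^\ast}\le|\{d:d\,P^0_{hp}\,m\}|$ is not used to make $h^\ast$ \emph{full} of doctors above $m$; rather (as in the paper) it is what makes it feasible to freeze $\kappa_{h^\ast}-2$ doctors ranked above $m$ at $h^\ast$ while still leaving two active slots there --- this is exactly what lets the blocking pair $((h^\ast,h^\ast),c)$ succeed in the $\mu(c)=(h_2,h_2)$ sub-case, since then $h^\ast$'s two active slots hold $d_1$ and the below-$m$ helper $d_2$, both of whom $\{f,m\}$ dominates under responsiveness.
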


The proof of this theorem is provided in Appendix \ref{proof3}.

As an immediate consequence of Theorems~\ref{th2} and~\ref{th3}, we obtain the following characterization:

\begin{cor}
Let hospitals' preferences satisfy CPI with respect to $P^0_{hp}$. Then, a stable matching exists at every unrestricted extension of $P^0_{hp}$ if and only if $P^0_{hp}$ satisfies SCPI.
\end{cor}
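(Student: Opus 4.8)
The plan is to derive the corollary directly by combining Theorem~\ref{th2} and Theorem~\ref{th3}, with the only non-formal step being a check that RVT extensions form a subclass of unrestricted extensions. First I would handle the ``if'' direction: assume $P^0_{hp}$ satisfies SCPI. Then Theorem~\ref{th2} immediately gives that a stable matching exists at every unrestricted extension of $P^0_{hp}$, which is exactly the claim. Nothing further is needed here.

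For the ``only if'' direction I would argue by contraposition. Suppose $P^0_{hp}$ does not satisfy SCPI. By Theorem~\ref{th3} there is an RVT extension $\utilde{P}$ of $P^0_{hp}$ at which no stable matching exists. The key small observation to record is that every RVT extension is in particular an unrestricted extension: by Definition~\ref{def2} a preference satisfying RVT is an element of $\mathbb{L}(\bar{H}^2)$, and by the standing convention (restated in the definition of RVT) it ranks every allocation in which both members are matched above every allocation in which at least one member is unmatched, so it meets the minimal requirement in the definition of an unrestricted preference. Hence the profile $\utilde{P}$ produced by Theorem~\ref{th3} is an unrestricted extension of $P^0_{hp}$ without a stable matching, so it is \emph{not} the case that a stable matching exists at every unrestricted extension of $P^0_{hp}$. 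Combining the two directions yields the equivalence.

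There is essentially no genuine obstacle here, since the statement is flagged as an immediate consequence of the two theorems; the only point that requires a sentence of justification—rather than being purely mechanical—is the inclusion ``RVT extension $\Rightarrow$ unrestricted extension,'' which makes the conclusion of Theorem~\ref{th3} strong enough to contradict the right-hand side of the corollary. I would state that inclusion explicitly and then conclude.
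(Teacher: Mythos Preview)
Your proposal is correct and matches the paper's approach: the corollary is stated there as an immediate consequence of Theorems~\ref{th2} and~\ref{th3}, with no separate proof given. Your only additional remark—that every RVT extension is in particular an unrestricted extension—is exactly the one-line observation needed to make the contrapositive direction go through, and it is justified by the standing assumption that couples prefer any fully matched allocation to any partially unmatched one.
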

	
\section{Matching market with non-identical hospital preferences} \label{sec_coup}

In both Section~\ref{RV} and Section~\ref{CP}, we assumed that hospitals have identical preferences over individual doctors. In this section, we relax this assumption and investigate the existence of stable matchings when hospitals may have distinct preferences.

It is evident from Example~\ref{exnew} that a stable matching may fail to exist under this generalization unless additional structural conditions are imposed. A natural candidate is the RF property on couples’ preferences. However, as we demonstrate in Example~\ref{exam2}, the RF property alone does not guarantee the existence of a stable matching in this more general setting.

Recall that a collection of couples' preferences $P^0_{C}$ satisfies the \emph{responsive for $F$ (RF)} property if for all $c = \{f, m\} \in C$ and for all $h, h' \in H$, 
\[
(h, h) \succ^0_c (h', h) \quad \Rightarrow \quad h \succ^0_f h'.
\]

\begin{example} \label{exam2} \normalfont
Consider a matching problem with hospitals $H = \{h_1, h_2, h_3\}$, each with capacity $2$, and doctors $D = \{f, m, s_1, s_2, s_3, s_4\}$, where $c = \{f, m\}$ is the only couple. The preferences are given in Table~\ref{tb}.

\begin{table}[!hbt]
    \centering
    \begin{tabular}{c c c c c c c c c c}
        \hline
        $P_{h_1}$ & $P_{h_2}$ & $P_{h_3}$ & $P_{s_1}$ & $P_{s_2}$ & $P_{s_3}$ & $P_{s_4}$ & $P_f$ & $P_m$ & $P_c$ \\
        \hline
        $s_3$ & $s_4$ & $s_3$ & $h_2$ & $h_3$ & $h_1$ & $h_2$ & $h_1$ & $h_2$ & $(h_1,h_2)$ \\
        $s_4$ & $s_3$ & $s_4$ & $h_1$ & $h_1$ & $h_2$ & $h_1$ & $h_3$ & $h_1$ & $(h_1,h_1)$ \\
        $s_1$ & $f$ & $m$ & $h_3$ & $h_2$ & $h_3$ & $h_3$ & $h_2$ & $h_3$ & $(h_1,h_3)$ \\
        $f$ & $m$ & $f$ &  &  &  &  &  &  & $(h_3,h_3)$ \\
        $m$ & $s_1$ & $s_1$ &  &  &  &  &  &  & $(h_3,h_2)$ \\
        $s_2$ & $s_2$ & $s_2$ &  &  &  &  &  &  & $(h_3,h_1)$ \\
         & & & & & & & & & $(h_2,h_2)$ \\
         & & & & & & & & & $(h_2,h_1)$ \\
         & & & & & & & & & $(h_2,h_3)$ \\
        \hline
    \end{tabular}
    \caption{}
    \label{tb}
\end{table}

We show that no stable matching exists at this preference profile. Assume, for contradiction, that $\mu$ is a stable matching.

From the preferences of $s_3$ and $h_1$, we deduce $\mu(s_3) = h_1$. Similarly, from $s_4$ and $h_2$, we get $\mu(s_4) = h_2$. Also, since $s_1 \succ_h s_2$ for all $h \in H$, it must be that $\mu(s_1) \succeq_{s_1} \mu(s_2)$. We now consider all possible allocations for the couple $c = \{f, m\}$ that are compatible with these placements and show that each leads to a blocking pair:

\begin{itemize}
    \item[$\bullet$] $\mu(c) = (h_1, h_2)$: Since $h_1 \succ_{s_1} h_3$ and $s_1 \succ_{h_1} f$, the pair $(h_1, s_1)$ blocks $\mu$.
    \item[$\bullet$] $\mu(c) = (h_1, h_3)$: Since $(h_1, h_2) \succ_c (h_1, h_3)$ and $m \succ_{h_2} s_1$, $\mu$ is blocked by $((h_1, h_2), c)$.
    \item[$\bullet$] $\mu(c) = (h_3, h_3)$: Since $(h_1, h_2) \succ_c (h_3, h_3)$ and both $m \succ_{h_2} s_1$ and $f \succ_{h_1} s_2$, $\mu$ is blocked by $((h_1, h_2), c)$.
    \item[$\bullet$] $\mu(c) = (h_3, h)$ for $h \in \{h_1, h_2\}$: Since $(h_3, h_3) \succ_c (h_3, h)$ and $m \succ_{h_3} s_2$, $\mu$ is blocked by $((h_3, h_3), c)$.
    \item[$\bullet$] $\mu(c) = (h_2, h_1)$: Since $h_1 \succ_{s_1} h_3$ and $s_1 \succ_{h_1} m$, $(h_1, s_1)$ blocks $\mu$.
    \item[$\bullet$] $\mu(c) = (h_2, h_3)$: Since $(h_3, h_3) \succ_c (h_2, h_3)$ and $f \succ_{h_3} s_2$, $\mu$ is blocked by $((h_3, h_3), c)$.
\end{itemize}

Thus, no stable matching exists at this profile.
\end{example}

\medskip

Example~\ref{exam2} shows that even when the RF property is satisfied, a stable matching may not exist under non-identical hospital preferences. This motivates considering additional structure on hospitals’ preferences.

We now introduce the \emph{common preference over couple members (CPC)} property, which stipulates that all hospitals agree on the relative ranking of the members of each couple. Following our naming convention, we assume that the $F$-member is the better-ranked member.

\begin{definition}
A collection of hospital preferences $P_H = \{P_h\}_{h \in H}$ satisfies the \emph{CPC property} if for every couple $c = \{f, m\} \in C$ and for every $h \in H$, we have $f \succ_h m$.
\end{definition}

We now show that even the combination of RF and CPC properties is not sufficient to ensure the existence of a stable matching.

\begin{example} \label{ex3} \normalfont
In this example, we exhibit a preference profile that satisfies both the CPC property and the RF property, yet admits no stable matching. Let the set of hospitals, their capacities, and the set of doctors be the same as in Example~\ref{exam2}. The preferences are given in Table~\ref{tb2}.

\begin{table}[!hbt]
    \centering
    \begin{tabular}{c c c c c c c c c c}
        \hline
        $P_{h_1}$ & $P_{h_2}$ & $P_{h_3}$ & $P_{s_1}$ & $P_{s_2}$ & $P_{s_3}$ & $P_{s_4}$ & $P_f$ & $P_m$ & $P_c$ \\
        \hline
        $s_3$ & $s_4$ & $s_3$ & $h_2$ & $h_3$ & $h_1$ & $h_2$ & $h_1$ & $h_2$ & $(h_1,h_2)$ \\
        $s_4$ & $s_3$ & $s_4$ & $h_1$ & $h_1$ & $h_2$ & $h_1$ & $h_3$ & $h_1$ & $(h_1,h_1)$ \\
        $s_1$ & $f$ & $f$ & $h_3$ & $h_2$ & $h_3$ & $h_3$ & $h_2$ & $h_3$ & $(h_1,h_3)$ \\
        $f$ & $m$ & $m$ & & & & & & & $(h_3,h_3)$ \\
        $m$ & $s_1$ & $s_1$ & & & & & & & $(h_3,h_2)$ \\
        $s_2$ & $s_2$ & $s_2$ & & & & & & & $(h_3,h_1)$ \\
        & & & & & & & & & $(h_2,h_2)$ \\
        & & & & & & & & & $(h_2,h_1)$ \\
        & & & & & & & & & $(h_2,h_3)$ \\
        \hline
    \end{tabular}
    \caption{}
    \label{tb2}
\end{table}

The proof that no stable matching exists in this profile follows the same structure as in Example~\ref{exam2} and is omitted for brevity.
\end{example}

\medskip

It follows from Example~\ref{ex3} that even when couples' preferences satisfy the RF property and hospitals’ preferences satisfy the CPC property, a stable matching may not exist. Moreover, this holds even under the additional constraint that no doctor is ranked between the two members of any couple at any hospital.

In what follows, we proceed to strengthen both the RF and CPC properties in order to obtain a sufficient condition for the existence of a stable matching.

\begin{remark}
Throughout this section, we assume that couples' preferences satisfy the RF property and hospitals' preferences satisfy the CPC property.
\end{remark}

\subsection{Condition on couples' preferences}

In view of the preceding discussion, we strengthen the RF property for couples’ preferences by introducing an additional condition, which we call the \emph{Strong Responsive for $F$} (SRF) property. We first provide an informal description of this property, followed by its formal definition.

Let $P^0_C = (\{P^0_d\}_{d \in D \setminus S}, \{P^0_c\}_{c \in C})$ denote a collection of preferences where $P^0_d$ is the preference of a non-single doctor $d$, and $P^0_c$ is the preference of couple $c$, such that each $P^0_c$ satisfies the RF property. An \emph{extension} of $P^0_C$ refers to any full preference profile in which:
\begin{enumerate}
    \item [(i)] the preferences of non-single doctors and couples are given by $P^0_C$, and
    \item [(ii)] the hospitals’ preferences satisfy the CPC property.
\end{enumerate}

We now define the SRF property. Informally, for a couple $c = \{f, m\}$, the SRF property ensures that if the couple prefers the joint placement $(h,h)$ to $(h,h')$, and if this preference requires $m$ to violate responsiveness by choosing $h$ over a more-preferred $h'$, then such a violation is justified only if $f$ ranks $h'$ higher than $h$. That is, responsiveness can be violated only at a hospital $h$ that is not top-ranked by $f$, and only in favor of a hospital that $f$ individually prefers to $h$.

\begin{definition}
A collection of preferences $P^0_C$ satisfying the RF property is said to satisfy the \emph{Strong RF (SRF)} property if for every couple $c = \{f,m\} \in C$ and all $h, h' \in H$ such that $r_1(P^0_f) \ne h$, the following implication holds:
\[
(h, h) \succ^0_c (h, h') \quad \text{and} \quad h' \succ^0_m h \quad \Rightarrow \quad h' \succ^0_f h.
\]
\end{definition}

The following lemma shows that when there are only two hospitals, the SRF and RF properties coincide.

\begin{lemma}
Suppose $|H| = 2$. Then a collection of preferences $P^0_C$ satisfies the SRF property if and only if it satisfies the RF property.
\end{lemma}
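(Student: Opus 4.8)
The plan is to treat the two implications separately; only one of them has any content once we invoke $|H|=2$.

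For the forward implication, there is nothing to prove beyond unwinding the definitions: the SRF property is, by definition, attributed only to collections $P^0_C$ that already satisfy RF, and it merely adjoins one further implication to RF. Hence any $P^0_C$ satisfying SRF satisfies RF, and this uses nothing about the number of hospitals.

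For the reverse implication, I would assume $P^0_C$ satisfies RF, write $H=\{h_1,h_2\}$, fix a couple $c=\{f,m\}$ and hospitals $h,h'\in H$ with $r_1(P^0_f)\ne h$, and verify the SRF implication by splitting on whether $h=h'$. If $h=h'$, the antecedent of the SRF implication contains the clause $h'\succ^0_m h$, i.e.\ $h\succ^0_m h$, which is impossible since $P^0_m\in\mathbb{L}(\bar H)$ is antisymmetric; so the implication holds vacuously. If $h\ne h'$, then since there are exactly two hospitals we have $\{h,h'\}=\{h_1,h_2\}$, and $r_1(P^0_f)$, being an element of $\{h_1,h_2\}$ different from $h$, must equal $h'$; thus $h'\succ^0_f h$, which is exactly the conclusion of the SRF implication. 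In either case the implication holds, so $P^0_C$ satisfies SRF.

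I do not expect a real obstacle here: the essential point is simply that with only two hospitals the side condition $r_1(P^0_f)\ne h$ already pins down $f$'s favorite hospital as $h'$, so the conclusion $h'\succ^0_f h$ of the SRF implication is automatic and SRF adds no information beyond RF. The only matters requiring a moment's care are recognizing that the ``SRF $\Rightarrow$ RF'' direction is definitional rather than something to be argued, and disposing of the degenerate case $h=h'$ using antisymmetry of $P^0_m$; notably, RF itself is never invoked in the argument, entering only through the convention that SRF is stated for RF-collections.
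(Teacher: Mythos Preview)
Your proof is correct and takes essentially the same approach as the paper: both arguments rest on the observation that, with only two hospitals, the side condition $r_1(P^0_f)\ne h$ forces $r_1(P^0_f)=h'$ and hence $h'\succ^0_f h$. The paper frames this as a proof by contradiction (assume RF but not SRF, then derive $r_1(P^0_f)=h$) while you verify the SRF implication directly and are slightly more explicit about disposing of the degenerate case $h=h'$; these are the same argument up to contraposition.
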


\begin{proof}
It is clear from the definition that the SRF property implies the RF property. For the converse, suppose $P^0_C$ satisfies the RF property but violates the SRF property for some couple $c = \{f,m\}$. Let $H = \{h_1, h_2\}$, and suppose, without loss of generality, that
\[
(h_1, h_1) \succ^0_c (h_1, h_2), \quad h_2 \succ^0_m h_1, \quad \text{and} \quad h_1 \succ^0_f h_2.
\]
Since $|H| = 2$, the condition $h_1 \succ^0_f h_2$ implies that $r_1(P^0_f) = h_1$, contradicting the assumption that $r_1(P^0_f) \ne h_1$. Therefore, SRF is not violated, and the lemma follows.
\end{proof}

We are now in a position to state our main result of this section, which characterizes the SRF property as a necessary and sufficient condition for the existence of a stable matching in every extension of a given collection of preferences.

\begin{theorem} \label{th_non}
Let $P^0_C$ be a collection of preferences of couples satisfying the RF property.
\begin{itemize}
    \item[(i)] If $P^0_C$ satisfies the SRF property, then a stable matching exists for every extension of $P^0_C$.
    \item[(ii)] If $P^0_C$ does not satisfy the SRF property, then there exists an extension of $P^0_C$ for which no stable matching exists.
\end{itemize}
\end{theorem}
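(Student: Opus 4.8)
The plan is to prove the two directions separately. Direction (i) is the substantial one and calls for constructing a stable matching by a deferred-acceptance-type procedure adapted to couples (alternatively one could induct on the number of couples, pinning down one couple's placement and solving the residual market); direction (ii) is a counterexample construction generalizing Example~\ref{exam2}. One preliminary point: by the preceding lemma, when $|H|=2$ the SRF and RF properties coincide, so under the standing hypothesis that $P^0_C$ satisfies RF, direction (ii) is non-vacuous only when $|H|\ge 3$, and I would use the extra hospital freely in the construction.

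For part (i), the scheme is a couples-adapted doctor-proposing deferred acceptance algorithm. Single doctors propose down their individual rankings; each couple $c=\{f,m\}$ proposes to pairs $(h_f,h_m)\in\bar H^2$ along $P^0_c$, with $f$ proposing to $h_f$ and $m$ to $h_m$; each hospital $h$ tentatively holds its $P_h$-best feasible subset of its current proposers, where responsiveness of $P_h$ together with CPC (so $f\,P_h\,m$) determines the hold in the cases that matter, in particular a hospital holding a couple member never drops it in favour of a lower-ranked single. The delicate design choice is the retraction rule when a member of a couple is rejected from its current target pair; whatever form it takes (e.g.\ both members retract and the couple advances to the next pair of $P^0_c$), a retraction frees a hospital slot and can restart chains of single-doctor proposals, so the first task is termination. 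I would do this by exhibiting a monotone quantity — roughly how far each single doctor and each couple has descended its list, refined by the rank of the $f$-member's tentative hospital in $P^0_f$ — that cannot recur, using RF and CPC to control how the $f$-member moves during a retraction. That this control is exactly what RF provides is visible from Example~\ref{exnew}: it is the violation of RF there that lets the couple block the matching produced by this very procedure.

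Next I would verify the output matching $\mu$ is stable. Ruling out a hospital–single blocking pair is the standard deferred-acceptance argument, using CPC so that the only couple member a single could hope to displace is the $m$-member, which the proposal history forbids. The crux is ruling out a couple blocking pair $((h_f,h_m),c)$; I would split along the three cases of the blocking definition and, within each, on whether $f$ and $m$ are currently unmatched, matched to a blocking hospital, or matched elsewhere. For a togetherness block $h_f=h_m=h$ with $(h,h)\,P^0_c\,\mu(c)$: the couple proposed to $(h,h)$ at some stage and was rejected, and RF — which pins down $f$'s ranking of $h$ against $f$'s current hospital — together with the bookkeeping on the sets $h$ held shows $h$ cannot be interested in $\{f,m\}$. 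The case where $\mu$ already matches $f$ to $h_f$, only $m$ must move (cf.\ Remark~\ref{rem_cop}), and the couple also prefers the togetherness pair $(h_f,h_f)$ to $(h_f,h_m)$ is precisely where SRF is needed: with $h=h_f$, and provided $h_f$ is not $f$'s top hospital (the top case being handled directly), SRF yields $h_m\,P^0_f\,h_f$, contradicting $f$'s deferred-acceptance stability at $h_f$. The remaining off-diagonal sub-cases reduce to the individual-proposal logic.

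For part (ii), from the failure of SRF I extract a couple $c=\{f,m\}$ and hospitals $h,h'$ with $r_1(P^0_f)\ne h$, $h'\,P^0_m\,h$, $(h,h)\,P^0_c\,(h,h')$, and yet $h\,P^0_f\,h'$; writing $\hat h:=r_1(P^0_f)$ I would build an extension mirroring Example~\ref{exam2}. Concretely, I would choose CPC-respecting hospital preferences and single doctors' preferences (adding dummy doctors, and dummy hospitals if needed, to meet the cardinality hypotheses) so that all single doctors are forced into fixed positions, leaving precisely the ``conflict'' vacancies at $h$, $h'$, and $\hat h$, and arranged so that from every feasible placement of the couple among these hospitals some hospital–single pair or the couple's own preferred togetherness/off-diagonal pair blocks, closing a cycle exactly as in the bulleted case analysis of Example~\ref{exam2}; I would then verify that the profile is genuinely an extension of $P^0_C$ and that no matching survives. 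The main obstacle throughout is part (i): designing the retraction rule so that termination is provable, and maintaining a strong enough invariant on the proposal history that RF, SRF, and CPC each can be invoked at exactly the configuration where a blocking pair would otherwise appear — the termination proof and the stability check are entangled, since ruling out a couple block argues from what the couple proposed to and which doctors each hospital held, and when.
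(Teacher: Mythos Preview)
Your plan for Part~(i) is viable in outline but is substantially more complicated than the paper's, and the obstacle you identify (termination under a retraction rule) is self-created. The paper's Algorithm~3 does \emph{not} have couples propose as pairs down $P^0_c$. Instead, each $f$-member proposes individually down her own preference $P^0_f$, exactly like a single doctor, while the corresponding $m$-member proposes down the conditional preference $P^0_{m|h}$, where $h$ is the hospital $f$ is currently proposing to. This is where RF is cashed in: because the couple's preference is responsive in the $f$-coordinate, letting $f$ follow $P^0_f$ never moves the couple in a direction it dislikes. With $f$'s trajectory monotone, there is no retraction rule to design and termination is immediate; the ``entanglement of termination and stability'' you worry about simply does not arise. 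The stability argument then splits on whether $\mu(f)=r_1(P^0_f)$: in the first case responsiveness for $f$ reduces any alleged couple block to a block of the form $((\mu(f),h_2),c)$, which the proposal history of $m$ kills; in the second case, when a diagonal block $((h,h),c)$ is alleged, SRF (applied with the case hypothesis supplying $r_1(P^0_f)\neq h$, once $h = r_1(P^0_f)$ is ruled out) forces $h\,P^0_f\,\mu(f)$, so $f$ was already rejected by $h$ and the block fails. Your localization of where SRF bites is slightly off: it is invoked when $\mu(f)$ is \emph{not} $f$'s top hospital and the alleged block is a diagonal pair, not in the sub-case where $f$ is already at her blocking hospital.

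For Part~(ii), your plan is correct, but you should reduce to Example~\ref{ex3} rather than Example~\ref{exam2}: in the latter, hospital $h_3$ ranks $m$ above $f$, so its profile violates CPC and is not an extension of $P^0_C$ in the sense of this section. The paper pads with enough doctors to absorb all capacity outside $\{h_1,h_2,h_3\}$, shows (by a short chasing argument) that any stable matching must place each padding doctor at her top hospital, and thereby reduces directly to the tableau of Example~\ref{ex3}.
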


The proof of Theorem~\ref{th_non} is deferred to the Appendix \ref{proof4}.

\subsection{Condition on hospitals' preferences}

We now strengthen the CPC property of hospitals' preferences by introducing an additional condition, which we refer to as the \emph{Strong Common Preference over Couples} (SCPC) property. We begin with an informal description of the property, followed by its formal definition.

Let $P^0_H = \{P^0_h\}_{h \in H}$ be a collection of hospitals' preferences, where each $P^0_h$ satisfies the CPC property. An \emph{extension} of $P^0_H$ refers to any full preference profile in which:
\begin{enumerate}
    \item [(i)]the preferences of hospitals are given by $P^0_H$, and
    \item [(ii)]the preferences of couples satisfy the RF property.
\end{enumerate}

The SCPC property strengthens CPC by requiring consistency across hospitals in how they compare other doctors to the more-preferred member (i.e., the $F$-member) of each couple. That is, if one hospital ranks a doctor $d$ higher than the $F$-member of some couple, then every other hospital must do the same.

\begin{definition}
A collection of hospitals’ preferences $P^0_H = \{P^0_h\}_{h \in H}$ satisfying the CPC property is said to satisfy the \emph{Strong Common Preference over Couples (SCPC)} property if, for all $h, h' \in H$, all $c = \{f, m\} \in C$, and all $d \in D$, we have:
\[
d \succ^0_h f \quad \text{if and only if} \quad d \succ^0_{h'} f.
\]
\end{definition}

Our next result shows that SCPC is a sufficient condition on hospitals’ preferences to guarantee the existence of a stable matching under any extension of $P^0_H$ where couples' preferences satisfy the RF property.

\begin{theorem}
Let $P^0_H = \{P^0_h\}_{h \in H}$ be a collection of hospitals’ preferences satisfying the SCPC property. Then a stable matching exists under every preference extension of $P^0_H$ in which couples’ preferences satisfy the RF property.
\end{theorem}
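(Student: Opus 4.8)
The plan is to adapt the sufficiency argument behind Theorem~\ref{th_non}(i) — the construction of a stable matching under CPC together with SRF — and to show that when the hospital side is strengthened from CPC to SCPC, the SRF hypothesis on couples can be dropped and the same construction still goes through using only RF. The conceptual reason is that SCPC pins down the queue position of the $F$-member of every couple: writing $A_c := \{\, d \in D : d \; P^0_h \; f \,\}$ for a couple $c = \{f,m\}$, SCPC makes $A_c$ independent of $h$, so $r(f, P^0_h) = |A_c| + 1$ for every $h$, and (applying the SCPC equivalence with $d$ itself an $F$-member) all hospitals order the $F$-members the same way. Thus, from any couple's point of view, its $f$-member faces a single, hospital-independent ranking position, which is precisely the structure that CPI provided in Section~\ref{RV} and that SRF substituted for in Section~\ref{sec_coup}.

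First I would record these structural consequences of SCPC, together with the companion facts that under CPC every hospital ranks $f$ above $m$, and that in any individually rational matching (hence any stable one) every doctor is matched and every seat is filled, since $\sum_{h} \kappa_h = |D|$. Next I would run the deferred-acceptance procedure in which single doctors propose individually down their own lists while each couple proposes as a unit down $P_c$: when a couple proposes $(h,h)$ both members apply to $h$ and occupy two seats, and when it proposes $(h_f,h_m)$ with $h_f \ne h_m$ the two members apply separately; each hospital tentatively holds its $P^0_h$-best feasible set among current applicants; and a couple that is rejected drops to its next $P_c$-best allocation $(h_f,h_m)$ such that $h_f$ has not rejected $f$ and $h_m$ has not rejected $m$. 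Since the set of allocations still available to a couple strictly shrinks at each of its rejections, and the set of applicants a hospital has seen only grows, the procedure terminates; call the outcome $\mu$. Here RF is used to keep each couple's descent down $P_c$ coherent with its $f$-member's individual ranking, so that a couple never ``wants back'' a hospital configuration it has abandoned in a way that would reopen a block.

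It then remains to verify that $\mu$ admits no blocking pair. No single-doctor pair $(h,s)$ blocks, by the standard deferred-acceptance argument: if $h \; P_s \; \mu(s)$ then $s$ proposed to and was rejected by $h$, at which point $h$ held $\kappa_h$ doctors ranked above $s$, and a hospital's held set only improves along the run — this step uses only responsiveness of $P^0_h$ and is insensitive to the non-identical hospital preferences over singles. For a couple pair $((h_f,h_m),c)$ with $c$ strictly preferring $(h_f,h_m)$ to $\mu(c)$, one shows that the relevant ``interested'' condition — case (i), (ii), or (iii) of the couple-blocking definition — fails at $h_f$ or at $h_m$. This is exactly where SCPC does the work of SRF: because $f$'s rank is the same at every hospital and every hospital agrees on which doctors outrank $f$, the fact that $h_f$ has rejected $f$ (respectively, that $h_m$ is uninterested in $m$, or in case (iii) that $h$ is uninterested in the pair $\{f,m\}$) is forced to be consistent with the RF-responsiveness of $P_c$ in its $f$-coordinate, which yields the required contradiction with $c$'s preference for $(h_f,h_m)$ over $\mu(c)$.

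I expect the main obstacle to be the co-location case, case (iii) of the couple-blocking definition, where the blocking test is whether a single hospital $h$ is \emph{interested in the set} $\{f,m\}$ — a genuinely complementary, set-valued condition that does not reduce to pairwise comparisons of individual doctors. Ruling it out requires controlling $f$'s seat (handled by its uniform rank under SCPC) and $m$'s seat simultaneously, and $m$'s rank is precisely the quantity \emph{not} made uniform by SCPC; so the argument must combine the SCPC structure, responsiveness of the hospital at $h$, and the RF link between $(h,h)$ and the $f$-coordinate of $P_c$. Making this interaction airtight — and likewise handling a couple that has voluntarily vacated a hospital during the run, so that a freed seat taken by a worse doctor does not create a block — is the delicate part; everything else is the routine deferred-acceptance bookkeeping sketched above. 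Note that, unlike the earlier results, only sufficiency is claimed, so no matching non-existence example is needed.
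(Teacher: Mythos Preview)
Your plan departs from the paper in a way that creates a real gap. You propose to have each couple propose \emph{as a unit} down $P_c$, with a couple dropping to its next $P_c$-best allocation whenever either member is rejected. But this is not what Algorithm~3 (the construction behind Theorem~\ref{th_non}(i)) does, and it is not what the paper does here either. The paper's Algorithm~4 exploits SCPC in a different place: since all hospitals agree on the set $\{d : d\,P^0_h\,f_i\}$ for every $f_i$, the doctors are partitioned into hospital-independent tiers $F_1,\ldots,F_k$ (with $f_i$ the bottom of $F_i$), and DPDA is run tier by tier. Crucially, each $f$-member proposes \emph{individually} down $P^0_f$ and is permanently fixed at some $h_i$ by the end of her tier (later tiers contain only doctors every hospital ranks below her); only \emph{after} that does $m_i$ propose, down the conditional preference $P^0_{m_i\mid h_i}$. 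Because $f$ never proposes as part of a couple, $f$ never vacates a seat voluntarily, and the standard DPDA invariant ``a hospital's held set only improves'' is preserved throughout. The stability check is then a two-line application of RF: $\mu(f)\,R^0_f\,h_1$ (else $f$ was rejected by $h_1$ and every $d\in\mu(h_1)$ beats $f$, hence $m$, so $h_1$ is uninterested), whence responsiveness-for-$f$ gives $(\mu(f),h_2)\,R^0_c\,(h_1,h_2)\,P^0_c\,\mu(c)$, so $h_2\,P^0_{m\mid\mu(f)}\,\mu(m)$, so $m$ was rejected by $h_2$ and $h_2$ is uninterested. SCPC is used only to make the tiering well-defined; it plays no role in the blocking argument itself.

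Your ``couples propose down $P_c$'' algorithm loses exactly this invariant. When $c$ proposes $(h_f,h_m)$, $f$ is accepted at $h_f$ but $m$ is rejected at $h_m$, your rule has the couple move on and $f$ leave $h_f$ voluntarily. A worse doctor can then take that seat, so a hospital's held set can degrade along the run; later, when you try to argue that $h_1$'s rejection of $f$ means every $d\in\mu(h_1)$ beats $f$, you no longer know this---some of the doctors that caused the rejection may themselves have been $F$-members of other couples who subsequently vacated. You flag this ``voluntarily vacated'' case as the delicate part but give no mechanism to control it, and RF alone does not prevent such cascades. The fix is not to patch the set-valued case~(iii) analysis; it is to change the algorithm so that $f$ proposes individually (and hence monotonically) and $m$ proposes conditionally once $f$ is settled---which is precisely what SCPC makes possible and what the paper does.
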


The proof of this theorem is provided in Appendix~\ref{proof5}.

\section{Conclusion}
	
In this paper, we have studied many-to-one matching problems between doctors and hospitals, where doctors may include couples with joint preferences. We first analyzed the case in which hospitals share a common preference over individual doctors. We showed that when couples are allowed to violate responsiveness only for the sake of being matched together, a stable matching exists at every preference profile if and only if the less-preferred member (according to the common hospital preference) is willing to compromise to be together with the more-preferred member. We also established necessary and sufficient conditions for the existence of a stable matching at every preference profile when couples are allowed to violate responsiveness arbitrarily.

Next, we considered the more general case where hospitals do not necessarily have identical preferences over individual doctors. We showed that, under the common preference over couple members (CPC) property, a stable matching exists at every profile if and only if the couples’ preferences satisfy the Strong Responsive for $F$ (SRF) property. Furthermore, we proved that if hospitals’ preferences satisfy the Strong CPC (SCPC) property, then a stable matching exists under every extension where couples’ preferences satisfy the RF property.

An interesting direction for future research is to explore settings where: (i) hospitals are partitioned based on geographical regions (i.e., hospitals in the same region form a partition), and (ii) couples violate responsiveness in order to be employed at hospitals within the same region. It follows from Theorem~\ref{th2} that a stable matching exists in such settings if hospitals' preferences satisfy SCPI. However, SCPI may not be necessary for the existence of a stable matching. Identifying the exact necessary and sufficient conditions in this setting remains an open problem.

    \section{Appendix: Remaining proofs}

	\subsection{Proof of Theorem \ref{theo3}}\label{proof1}

\textbf{Proof of Part $(i)$:} The proof is constructive. Suppose $P^0_{C}$ satisfies the RF property. We show that every extension $\utilde{P}$ of $P^0_C$ admits a stable matching.

Take any extension $\utilde{P}$ of $P^0_C$. Recall that by our initial assumption on CPI, we have $m_i P^0_{hp} m_j$ for all $i,j \in \{1,\ldots,k\}$ with $i < j$. We now describe an algorithm that constructs a stable matching at $\utilde{P}$.

\medskip
\noindent\textsc{Algorithm 1}: The algorithm proceeds in $k+1$ steps. Below, we describe Step 1 and a general Step $j$.

\smallskip
\noindent \textbf{Step 1}: Use the Sequential Deferred Acceptance (SDA) algorithm to match all doctors ranked above $m_1$ according to $P^0_{hp}$. Suppose $f_1$ is matched to hospital $h_1$. Then match $m_1$ using SDA where $m_1$ proposes according to the preference $P^0_{m_1|h_1}$.

\smallskip
\noindent \textbf{Step $j$} ($2 \leq j \leq k$): Having matched all doctors ranked above $m_{j-1}$ in steps $1$ through $j-1$, use SDA to match all doctors ranked between $m_{j-1}$ and $m_j$ according to $P^0_{hp}$. Suppose $f_j$ is matched to hospital $h_j$. Then match $m_j$ using SDA where $m_j$ proposes according to the preference $P^0_{m_j|h_j}$.

\smallskip
\noindent Repeat this process through Step $k$, and finally, match the remaining single doctors using SDA in Step $k+1$.

\medskip
Let $\mu$ be the matching output by Algorithm 1. We now show that $\mu$ is stable under $\utilde{P}$.

\medskip
\textbf{Blocking by a hospital-single pair:} We first show that $\mu$ cannot be blocked by a pair $(h,s)$ where $h \in H$ and $s \in S$. Suppose, for contradiction, that $(h,s)$ blocks $\mu$. By the construction of Algorithm 1, all doctors who propose before $s$ are ranked above $s$ in $P^0_{hp}$. Since $s \notin \mu(h)$, this must mean either (i) $\mu(s) \utilde{P}_s h$, or (ii) $d P^0_{hp} s$ for all $d \in \mu(h)$ and $|\mu(h)| = \kappa_h$.

If (i) holds, then $s$ does not block $\mu$ with $h$. If (ii) holds, then by responsiveness of hospital preferences, we have $\mu(h) \utilde{P}_h ((\mu(h) \setminus d) \cup s)$ for all $d \in \mu(h)$, which contradicts that $(h,s)$ is a blocking pair. Therefore, $\mu$ cannot be blocked by any hospital-single doctor pair.

\medskip
\textbf{Blocking by a hospital-couple pair:} Next, we show that $\mu$ cannot be blocked by a pair $((h_1,h_2),c)$ for some $c = \{f,m\} \in C$.

Assume for contradiction that $((h_1,h_2),c)$ blocks $\mu$. We prove in two steps that this leads to a contradiction.

\smallskip
\noindent \textbf{Step 1:} If $((h_1,h_2),c)$ blocks $\mu$, then $((\mu(f),h_2),c)$ also blocks $\mu$.

If $\mu(f) = h_1$, then the claim is trivial. Otherwise, suppose $\mu(f) \neq h_1$. We first claim that $\mu(f) P^0_f h_1$.

Suppose instead that $h_1 P^0_f \mu(f)$. Since $f$ proposes according to $P^0_f$ and all doctors before $f$ are ranked higher than $f$ in $P^0_{hp}$, it must be that $f \notin \mu(h_1)$ only if $d P^0_{hp} f$ for all $d \in \mu(h_1)$ and $|\mu(h_1)| = \kappa_{h_1}$. By responsiveness, this implies that $\mu(h_1) \utilde{P}_{h_1} ((\mu(h_1) \setminus d) \cup f)$ for all $d \in \mu(h_1)$, which contradicts the assumption that $((h_1,h_2),c)$ blocks $\mu$. Thus, we must have $\mu(f) P^0_f h_1$.

Next, we show that $(\mu(f), h_2) P^0_c (h_1, h_2)$. Suppose not; i.e., $(h_1,h_2) P^0_c (\mu(f), h_2)$. If $h_1 \neq h_2$, then the RVT (Responsiveness with Violation for Togetherness) condition implies $h_1 P^0_f \mu(f)$, a contradiction. If $h_1 = h_2$, then the RF property implies the same contradiction. Hence, $(\mu(f), h_2) P^0_c (h_1, h_2)$.

Now, since $((h_1,h_2),c)$ blocks $\mu$, it must be that for some $d \in \mu(h_2)$, we have $((\mu(h_2) \setminus d) \cup m) \utilde{P}_{h_2} \mu(h_2)$. Together with $(\mu(f), h_2) P^0_c (h_1, h_2)$, it follows that $((\mu(f), h_2), c)$ also blocks $\mu$.

\smallskip
\noindent \textbf{Step 2:} We now show that $((\mu(f), h_2), c)$ cannot block $\mu$.

Let $\mu(f) = h$. Since $(\mu(f), h_2) P^0_c (\mu(f), \mu(m))$, it follows from the definition of $P^0_{m|h}$ that $h_2 P^0_{m|h} \mu(m)$.

Since all doctors who propose before $m$ are ranked above $m$ in $P^0_{hp}$ and $m \notin \mu(h_2)$, it must be that $d P^0_{hp} m$ for all $d \in \mu(h_2)$ and $|\mu(h_2)| = \kappa_{h_2}$. By responsiveness, this implies $\mu(h_2) \utilde{P}_{h_2} ((\mu(h_2) \setminus d) \cup m)$ for all $d \in \mu(h_2)$, contradicting the assumption that $((\mu(f), h_2), c)$ blocks $\mu$.

This contradiction completes the proof that $\mu$ is stable. 

        \medskip
		
\noindent \textbf{Proof of Part $(ii)$:} Suppose $P^0_{C}$ does not satisfy the RF property. We show that there is an extension of $P^0_{C}$ with no stable matching. Since $P^0_{C}$ does not satisfy the RF property, there must exist a couple $c = \{f,m\}$ and two hospitals $h_1,h_2 \in H$ such that $(h_1,h_1) P^0_c (h_2,h_1)$ and $h_2 P^0_f h_1$. Moreover, since $h_2 P^0_f h_1$, it follows from the definition of RVT that $(h_2,h_2) P^0_c (h_1,h_2)$.

Consider a profile $\tilde{P}$ such that

\begin{enumerate}
    \item there are doctors $d_1,d_2 \in D \setminus \{f,m\}$ with 
    \[
    f P^0_{hp} d_1 P^0_{hp} d_2 P^0_{hp} m
    \]
    such that $r_1(\tilde{P}_{d_1})=h_1$ and $r_1(\tilde{P}_{d_2})=h_2$,
    
    \item we have 
    \[
    |\{ d : d P^0_{hp} f \text{ and } r_1(\tilde{P}_d) = h_2 \}| = \kappa_{h_2} - 2, \] 
  \[  |\{ d : d P^0_{hp} f \text{ and } r_1(\tilde{P}_d) = h_1 \}| = \kappa_{h_1} - 2,
    \]
    and
    \[
    |\{ d : d P^0_{hp} f \text{ and } r_1(\tilde{P}_d) = h \}| = \kappa_h \quad \text{for all } h \neq h_1, h_2,
    \]
    
    \item the preferences of all couples other than $c$ satisfy responsiveness.
\end{enumerate}

Since $\sum_{h \in H} \kappa_h = |D|$ by the construction of $\tilde{P}$, the four bottom-ranked (lowest ranked) doctors in $P^0_{hp}$ are $f, d_1, d_2, m$. We show that there is no stable matching at $\tilde{P}$. Assume for contradiction that a matching $\mu$ is stable at $\tilde{P}$. Since $\mu$ is stable at $\tilde{P}$, by the construction of $\tilde{P}$, it is straightforward that $\mu(d) = r_1(\tilde{P}_d)$ for all $d P^0_{hp} f$.

Because
\[
|\{ d : d P^0_{hp} f \text{ and } r_1(\tilde{P}_d) = h \}| = \kappa_h \quad \text{for all } h \neq h_1, h_2,
\]
stability of $\mu$ implies that the doctors $f, d_1, d_2, m$ cannot be matched to any hospital other than $h_1$ and $h_2$. Moreover, since
\[
|\{ d : d P^0_{hp} f \text{ and } r_1(\tilde{P}_d) = h_2 \}| = \kappa_{h_2} - 2,\] and 
\[|\{ d : d P^0_{hp} f \text{ and } r_1(\tilde{P}_d) = h_1 \}| = \kappa_{h_1} - 2,
\]
exactly two doctors among $f, d_1, d_2, m$ must be matched to each of $h_1$ and $h_2$.

Now, we distinguish the following cases depending on the allocation of the couple $c$ and show that $\mu$ is not stable in any of these cases.

\medskip

\begin{itemize}
    \item Suppose $\mu(c) = (h_2,h_2)$.
    
    Then, $(h_2, d_2)$ blocks $\mu$ as $r_1(\tilde{P}_{d_2}) = h_2$ and $d_2 P^0_{hp} m$.
    
    \item Suppose $\mu(c) = (h_1, h_2)$.
    
    Then, $((h_2,h_2), c)$ blocks $\mu$ as $f P^0_{hp} d_1 P^0_{hp} d_2$, and by the definition of RVT $(h_2,h_2) P^0_c (h_1,h_2)$.
    
    \item Suppose $\mu(c) = (h_1,h_1)$.
    
    Then, $(h_1, d_1)$ blocks $\mu$ as $r_1(\tilde{P}_{d_1}) = h_1$ and $d_1 P^0_{hp} m$.
    
    \item Suppose $\mu(c) = (h_2,h_1)$.
    
    Then, $((h_1,h_1), c)$ blocks $\mu$ as $f P^0_{hp} d_1 P^0_{hp} d_2$, and by the initial assumption, $(h_1,h_1) P^0_c (h_2,h_1)$.
\end{itemize}

This completes the proof.

	
\subsection{Proof of Theorem \ref{th2}}\label{proof3}
	
Suppose $P^0_{hp}$ satisfies SCPI. We show that there exists a stable matching for every unrestricted extension $\utilde{P}$ of $P^0_{hp}$. Let us partition the set of couples $C$ into two subsets $C_1$ and  $C_2$ such that for all $c=\{f,m\} \in C_1$,  $|\{d\in D: dP^0_{hp}m\}| < \kappa_h$, and for all $c=\{f,m\} \in C_2$,  $|\{d\in D: dP^0_{hp}m\}| \geq \kappa_h$. Let us index the couples in $C_2$ as  $\{f^1,m^1\},\ldots \{f^l,m^l\}$ where $m^iP^0_{hp}m^j$ for all $i, j \in \{1,\ldots,l\}$ with  $i<j$. Since $P^0_{hp}$ satisfies SCPI and $|\{d\in D: dP^0_{hp}m\}| \geq \kappa_h$  for all $c=\{f,m\} \in C_2$,  this implies  $f^iP^0_{hp}f^j$ for all $i, j \in \{1,\ldots,l\}$ with $i<j$. In the following, we present an algorithm that produces a stable matching at $\utilde{P}$. Clearly, by construction, for any $c=\{f,m\} \in C_1$ and $c' = \{f',m'\} \in C_2$, we have  $m P^0_{hp} f'$.

		\medskip
		
		\noindent \textsc{Algorithm 2}: We present the $1^{st}$ step and a general step of the algorithm.

		\noindent \textbf{Step 1}: Use SDA to match all the doctors who are ranked above $f^1$ according to $P^0_{hp}$ in the following manner. All the single doctors $s \in S$ propose according to $\utilde{P}_s$. For any couple $c = \{f,m\} \in C_1$,  $f$ proposes according to her conditional preference in $\tilde{P}_c$. More formally, $f$ first proposes to the hospital $h_f$ such that $(h_f,h_m)$ appears at the top position of $\tilde{P_c}$ for some hospital $h_m$. If $f$ is rejected by $h_f$, she proposes to the hospital  $h'_f$ that appears after $h_f$  in the $f$-component of the preference $\tilde{P}_c$. In other words,  $h_f'$ is such that there is no hospital $h_f''$ other than $h_f$ such that $(h_f'',h_m'')$ appears above $(h_f',h_m')$ for some $h_m'$ and $h_m''$ according to the preference  $\tilde{P_c}$.  $f$ continues to propose this way till she is matched. Once $f$ is matched with some hospital $h$, $m$ starts  proposing  to the hospitals according to the preference  $\tilde{P}_{m|h}$ till he is matched. Once all the doctors who are ranked above $f^1$ are matched,  $c^1$ proposes to the hospitals according to the preference $\utilde{P}_{c_1}$ till both members of $c^1$ get matched. More formally, $c^1$ first proposes to  $r_1(\utilde{P}_{c^1})$, and if any member of the  couple is rejected, then it proposes to $r_2(\utilde{P}_{c^1})$, and so on until both members of the couple are accepted by the corresponding hospitals. 
		$$\vdots$$
		\noindent \textbf{Step j}: Having matched all the doctors from the top till $m^{j-1}$  in $P^0_{hp}$ in steps $1$ to $j-1$, use SDA to match all the doctors that ranked below $m^{j-1}$ and above $r^j$ according to $P^0_{hp}$. Note that for $j>1$, there is no couple $c=\{f,m\} \in C_1$ such that $f$ or $m$ is ranked below $m^{j-1}$ and above $f^j$ according to $P^0_{hp}$. For the couple, $c^j$ proposes according to the preference $\utilde{P}_{c^j}$ till both of them are accepted by the corresponding hospitals. That is, $c^j$ first proposes to $r_1(\utilde{P}_{c^j})$, and if at least one member of the couple is rejected, then they propose to $r_2(\utilde{P}_{c^j})$, and so on, till both of them are accepted.
		$$\vdots$$
		
		\noindent Continue this process till Step $l-1$. Having matched all the doctors from the top till $m^{l-1}$ according to $P^0_{hp}$ matched  in steps $1$ to $l-1$, use SDA to match all the doctors that are ranked below $m^{l-1}$ and above $f^l$ according to $P^0_{hp}$. We distinguish the following two cases to match remaining doctors. Note here, that the remaining doctors now include $f^l$ and all the doctors ranked below $f^l$. \smallskip

		\noindent \textbf{Case 1}. Suppose there is no single doctor in between $f^l$ and $m^l$ in $P^0_{hp}$ for all $h \in H$. Let  $c^l$ propose to $r_1(\utilde{P}_{c^k})$. If at least one member of the couple is rejected, then let $c^l$ propose to $r_2(\utilde{P}_{c^l})$, and so on. Continue this process till both members of  the couple are accepted. Finally, match all the remaining doctors using SDA. \smallskip
		
		\noindent \textbf{Case 2}. Suppose there is a single doctor, say $s'$, in between $f^l$ and $m^l$ in $P^0_{hp}$. Note that by SCPI, there cannot be more than one single doctor in between $f^l$ and $m^l$. Suppose $H'$ is the set of hospitals that have at least one remaining vacancy. Let $h'$ be the worst hospital in $H'$ according to $\utilde{P}_{s'}$ and let $h \in H'$ be such that $(h,h')\utilde{R}_{c^l}(h'',h')$ for all $h'' \in H'$. Match $c^l$ with $(h,h')$ and $s'$ to the hospital that has a remaining vacancy.
		
		\medskip 
		
		Let $\mu$ be the outcome of Algorithm 2. We show that $\mu$ is stable at $\utilde{P}$.

		Assume for contradiction that $\mu$ is blocked by a hospital and a single doctor or a pair of hospitals and a couple. We complete the proof by considering the two cases of Algorithm 2 separately. 
		
		\medskip

		\noindent \textbf{Case 1.} Suppose Case 1 of Algorithm 2 holds. 
		
		First, we show that $\mu$ cannot be blocked by $(h,s)$ for some $h \in H$ and $s \in S$. Assume for contradiction that some pair $(h,s)$ blocks $\mu$. By the nature of Algorithm 2, all the doctors who propose before $s$ are ranked above $s$ according to the SCPI $P^0_{hp}$. Moreover, for any $c=\{f,m\} \in C_2$, if $f P^0_{hp}s$, then by SCPI, $m P^0_{hp} s$. Since $s \notin \mu(h)$, by the nature of Algorithm 2, we have either $\mu(s)\utilde{P}_sh$ or $d P^0_{hp} s$ for all $d \in \mu(h)$ and $|\mu(h)|=\kappa_h$. Clearly, if $\mu(s)\utilde{P}_sh$ then $s$ does not block with hospital $h$. On the other hand, if $d P^0_{hp} s$ for all $d \in \mu(h)$ and $|\mu(h)|=\kappa_h$, then by responsiveness of hospitals' preferences, we have $\mu(h)\utilde{P}_h(\mu(h)\setminus d) \cup s$ for all $d \in \mu(h)$. Therefore, hospital $h$ does not block with $s$. This contradicts that $(h,s)$ blocks $\mu$.
		
		Next we show that $\mu$ can not be blocked by $((h_1,h_2),c)$ for some $h_1,h_2 \in H$ and $c =\{f,m\}\in C$. First, we claim $c \notin C_1$. Note that if each hospital has enough vacancies to accommodate the couple together with all doctors who are ranked above it, $c$ would not get rejected by any pair of hospitals it applies to and thus, it would have been  matched to their top ranked pair of hospitals, which contradicts our assumption that  $c$ blocks $\mu$. Therefore, it must not  be the case that  $|\{d \in D: dP^0_{hp} m\} < \kappa_h$ for all $h$, which means $c \notin C_1$. 
		
		In view of the preceding claim, it follows that if $\mu$ is blocked by $((h_1,h_2),c)$, then $c \in C_2$. By the nature of Algorithm 2, couple $c$  proposes to $(h_1,h_2)$ before proposing to $(\mu(f),\mu(m))$, and some hospital, say $h_i \in \{h_1,h_2\}$, rejects the corresponding  member of the couple $c$. We distinguish the following two sub-cases.
		
		\smallskip
		
		\noindent \textbf{Case 1.1.}  Suppose $h_1 \neq h_2$. Since $h_i$ rejects a doctor from couple $c$, it must be that $h_i$ has no vacancies when $c$ proposes to $(h_1,h_2)$. Since $c$ is in $C_2$, we have that $f$ and $m$ are adjacent in $P^0_{hp}$. It follows that all the doctors in $\mu(h_i)$  are preferred to both $f$ and $m$. Therefore, $h_i$ will be worse off by removing a doctor from $\mu(h_i)$ and taking a member from the couple $c$, which  contradicts that $((h_1,h_2),c)$ blocks $\mu$.
		
		\smallskip
		
		\noindent \textbf{Case 1.2.} Suppose $h_1=h_2$. Because $h_1$ rejects at least one member of $c$, it must be that $h_1$ has less than two vacancies when $c$ proposes to $(h_1,h_1)$. Let $D'$ be the set of doctors that are present in $h_1$ at the time when $c$ makes the proposal to $(h_1,h_1)$. By SCPI, the definition of $C_2$, and the nature Algorithm 2, this implies that each doctor in $D'$ is preferred to both the doctors of the couple $c$. Again, by Algorithm 2, it follows that $D'\subseteq \mu(h_1)$. This means $h_1$ must release some doctors from $D'$ in order to block with $c$. Since hospitals' preferences over sets of individuals satisfy responsiveness, therefore, $h_1$ will be worse off by removing two doctors from $D'$ in order to take the couple. This contradicts that $((h_1,h_1),c)$ blocks $\mu$. 
		
		This completes the proof of Theorem \ref{th2} for Case 1.	
		
		\smallskip
			
		\noindent \textbf{Case 2.} Suppose Case 2 of Algorithm 2 holds. Note that after matching all the doctors from the top till $f^l$ in $P^0_{hp}$, we have exactly three vacancies left since $\sum_{h \in H} \kappa_h = |D|$. Recall from Case 2 of our Algorithm that $H'$ is the set of hospitals with at least one vacancy left, after all the doctors ranked above $f^l$ have been matched.     
		
		By similar argument as in Case $1$, (i) $\mu$ cannot be blocked by $(h,s)$ for any $s P^0_{hp} f^l$, and (ii) $\mu$ cannot be blocked by $((h_1,h_2),c)$ for any $c$ such that $c \neq c^l$. 
		
		First, we show  $\mu$ cannot be blocked by $(h,s')$, where $s'$ is the unique single doctor ranked between $f^l$ and $m^l$ in $P^0_{hp}$. Suppose not. Since $d P^0_{hp} s'$ for all $d \neq m^l$, it follows that $h \in H'$. By Algorithm 2, $\mu(s') \in H'$ and $\mu(m^l)$ is the worst hospital in $H'$ according to $\utilde{P}_{s'}$. Since $\mu$ is blocked by $(h,s')$, we have $h \utilde{P}_{s'} \mu(s') \utilde{R}_{s'} \mu(m^l)$, which implies  $h \neq \mu(m^l)$. We now show that $h \neq \mu(f^l)$. Assume for contradiction, $h=\mu(f^l)$. By our earlier argument, since $h \neq \mu(m^l)$, $\mu(f^l)=\mu(m^l)$ implies $h \neq \mu(f^l)$. Suppose $\mu(f^l) \neq \mu(m^l)$. This means all the doctors in $\mu(f^l)$ are ranked above $s'$ according to $P^0_{hp}$,  contradicting the fact that $\mu(f^l)$ and $s'$ block $\mu$. This shows $h \neq \mu(f^l)$. By the definition of Algorithm 2, $h \in \{\mu(s'),\mu(f^l),\mu(m^l)\}$. Since $h \notin \{\mu(f^l),\mu(m^l)\}$, it must be that $h=\mu(s')$, and hence $h$ and $s'$ can not block.   
		
		Now, we show that $\mu$ cannot be blocked by $((h_1,h_2),c^l)$ for some $h_1,h_2 \in H$. Since $d P^0_{hp} f^l$ for all $d \notin \{s',m^l\}$, it follows from Algorithm 2 and the definition of $H'$ that $h_1,h_2 \in H'$. We complete the proof by distinguishing the following two cases.
		
		\medskip
		
		\noindent \tbf{Case 2.1.} Suppose $h_2 = \mu(m^l)$. By Algorithm 2, $\mu(c^l) \utilde{R}_{c^l} (h_1,h_2)$ for all $h_1 \in H'$. Therefore, $c^l$ will not block with $(h_1,h_2)$.
		
		\medskip
		
		\noindent \tbf{Case 2.2.} Suppose $h_2 \neq \mu(m^l)$. By Algorithm 2, this means all the doctors in $h_2$ are preferred to $m^l$ according to $P^0_{hp}$. Therefore, $h_2$ will not block with $m^l$.
		
		This completes the proof of Theorem \ref{th2} for Case 2. 
		Since Case 1 and Case 2 are exhaustive, this completes the proof of Theorem \ref{th2}.

	\subsection{Proof of Theorem \ref{th3}}\label{proof2}
Suppose a CPI $P^0_{hp}$ does not satisfy SCPI. We show that there exists an RVT extension of $P^0_{hp}$ with no stable matching. Since $P^0_{hp}$ does not satisfy SCPI, one of the following two cases must happen:\\		
		\textbf{Case 1.}  There is  a couple $c=\{f,m\}$ such that $r(m,P^0_{hp}) \neq |D|$,  $|\{d \in D: f P^0_{hp} d P^0_{hp} m\}| > 0$ and $|\{d \in D: d P^0_{hp} m\}| \geq \kappa_h$ for some $h \in H$. Thus, there exist doctors $d_1,d_2 $ such that $fP^0_{hp} d_1 P^0_{hp} m P^0_{hp} d_2 $ and a hospital $h_1$ such that $|\{d \in D: d P^0_{hp} m\}| \geq \kappa_{h_1}$.\\
		\textbf{Case 2.} There is  a couple $c=\{f,m\}$ such that $r(m,P^0_{hp})=|D|$ and $|\{d \in D: f P^0_{hp} d P^0_{hp} m\}| > 1$. In other words, there exist doctors $d_1,d_2$ such that $fP^0_{hp}d_1P^0_{hp}d_2P^0_{hp}m$.\\ 
		
		In the following, we present an RVT extension of $P^0_{hp}$ with no stable matching for both Case 1 and Case 2. \smallskip
		
		Take hospitals $h_1,h_2\in H$ and consider a preference profile $\utilde{P}$ such that 
		\begin{enumerate}
			\item $r_1(\utilde{P}_{f})=r_1(\utilde{P}_{d_2})=h_2$ and $r_1(\utilde{P}_{m})=r_1(\utilde{P}_{d_1})=h_1$,
			\item $r_2(\utilde{P}_{f})=r_2(\utilde{P}_{d_2})=h_1$ and $r_2(\utilde{P}_{m})=r_2(\utilde{P}_{d_1})=h_2$,
			\item $(h_1,h_1)\utilde{P}_c(h_2,h_1)$ and $(h_1,h_1)\utilde{P}_c(h_2,h_2)$,
			\item $(h_1,h_2) \utilde{P}_c (h,h')$ for all $h,h' \in H$ such that $(h,h')$ does not belong to the set $\{(h_1,h_1),(h_2,h_2),(h_2,h_1),(h_1,h_2)\}$,
			\item preference $\utilde{P}_c$ satisfies responsiveness for all pairs of hospitals other than $(h_1,h_1)$,
			\item preferences of all couples other than $c$ satisfy responsiveness, 
			\item $|\{d : r_1(\utilde{P}_{d})=h_1 \mbox{ and } d P^0_{hp} m\}|=\kappa_{h_1} - 1$, and  
			\item for all $d \notin \{f,m,d_1,d_2\}$, $|\{d : r_1(\utilde{P}_{d})=h_2\}|=\kappa_{h_2} - 2$ and $|\{d : r_1(\utilde{P}_{d})=h\}|=\kappa_{h}$ for all  $h \neq h_1,h_2$. 
			
		\end{enumerate}
		
		Note that the assumption made in condition 7 is possible as $|\{d \in D: d P^0_{hp} m\}| \geq \kappa_{h_1}$ and $f P^0_{hp} m$. However, $r_1(\utilde{P}_f) \neq h_1$.   
		We show that there is no stable matching at $\utilde{P}$ for both Case 1 and Case 2. Assume for contradiction that $\mu$ is a stable matching at $\utilde{P}$. Note that by the construction of $\utilde{P}$, for all doctors $d$ such that $dP^0_{hp}f$, we must have  $\mu(d)=r_1(\utilde{P}_d)$. In the following claim, we show that $\mu(d) \in \{h_1,h_2\}$ for all $d \in \{f,m,d_1,d_2\}$.

		\begin{claim}
			For all $d \in \{f,m,d_1,d_2\}$,  $\mu(d) \in \{h_1,h_2\}$.
		\end{claim}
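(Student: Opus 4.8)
The plan is to derive the claim from stability of $\mu$ at $\utilde P$ together with the construction constraints $(1)$--$(8)$ and the fact (just recorded before the claim) that $\mu(d)=r_1(\utilde P_d)$ whenever $d\,P^0_{hp}\,f$. First I would note two routine consequences of stability: every hospital is filled to capacity and every doctor is matched --- because $\sum_{h\in H}\kappa_h=|D|$, so an unmatched doctor leaves a vacancy somewhere, which blocks directly with an unmatched single doctor and, through the hospital--couple blocking notion (moving the unmatched member while the partner stays put), with an unmatched member of a couple. Next I would tally the first-choice demands: by $(1)$, $f$ and $d_2$ demand $h_2$ while $m$ and $d_1$ demand $h_1$; by $(8)$, exactly $\kappa_{h_2}-2$ doctors outside $\{f,m,d_1,d_2\}$ demand $h_2$ and exactly $\kappa_h$ such doctors demand each $h\notin\{h_1,h_2\}$; hence $\sum_h\kappa_h=|D|$ forces exactly $\kappa_{h_1}-2$ of the remaining doctors to demand $h_1$. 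So first-choice demand equals capacity at every hospital, and $\{f,m,d_1,d_2\}$ demands only $h_1$ and $h_2$. It therefore suffices to prove that for every $h\notin\{h_1,h_2\}$ the roster $\mu(h)$ contains only doctors whose first choice is $h$: then $|\mu(h)|=\kappa_h$ forces $\mu(h)$ to equal the (four-doctor-free) set of first-choice-$h$ doctors, so each of $f,m,d_1,d_2$ is matched to $h_1$ or $h_2$.

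For this reduction I would run an escalation argument on $X:=\{d:\mu(d)\neq r_1(\utilde P_d)\}$, which by the recalled fact contains no doctor $P^0_{hp}$-above $f$. The key step: if $d\in X$ is single or belongs to a couple other than $c$, then since neither $(r_1(\utilde P_d),d)$ nor the corresponding hospital--couple pair can block $\mu$ --- the latter using the hospital--couple blocking notion and the responsiveness of the other couple granted by $(6)$ --- responsiveness of $\utilde P_{r_1(\utilde P_d)}$ forces every doctor in the full roster $\mu(r_1(\utilde P_d))$ to be $P^0_{hp}$-better than $d$; and since at most $\kappa_{r_1(\utilde P_d)}-1$ of the $\kappa_{r_1(\utilde P_d)}$ doctors demanding $r_1(\utilde P_d)$ lie strictly above $d$, that roster must contain a strictly $P^0_{hp}$-better element of $X$. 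Iterating in the finite set $X$ yields a strictly $P^0_{hp}$-increasing chain, so it must terminate, and the only doctors at which the key step fails --- hence the only termini --- are $f$ and $m$. Applied to $d_1$ and $d_2$: since $d_1$ lies $P^0_{hp}$-above $m$, any chain starting at $d_1$ can only terminate at $f$, and one checks that a chain terminating at $f$ forces $\mu(f)$ to equal the first choice of the previous chain-doctor, in particular $\mu(f)\in\{h_1,h_2\}$ unless the misplacement propagates all the way up to $f$; the crux therefore reduces to showing $\mu(f)\in\{h_1,h_2\}$ and, symmetrically, $\mu(m)\in\{h_1,h_2\}$.

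Closing the argument at $f$ and $m$ is the step I expect to be hardest, since these doctors are governed by the RVT preference $\utilde P_c$, not by individual stability. I would use $(1)$--$(5)$, which force $(h_1,h_1),(h_2,h_1),(h_2,h_2),(h_1,h_2)$ to be the four top pairs of $\utilde P_c$, with $(h_1,h_1)$ above both $(h_2,h_1)$ and $(h_2,h_2)$, and with every other pair below $(h_1,h_2)$. Suppose, for contradiction, $\mu(f)\notin\{h_1,h_2\}$. If $\mu(m)\notin\{h_1,h_2\}$ as well, then the couple strictly prefers $(h_2,\mu(m))$ to $\mu(c)$ (responsiveness of $\utilde P_c$ on off-diagonal pairs, since $h_2=r_1(\utilde P_f)$), so by stability $h_2$ is not interested in $f$, which by responsiveness of $\utilde P_{h_2}$ puts all of $\mu(h_2)$ $P^0_{hp}$-above $f$ --- impossible, since doctors above $f$ are at their first choices, so at most $\kappa_{h_2}-2$ of them can sit in $\mu(h_2)$. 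If instead $\mu(m)\in\{h_1,h_2\}$, the same idea with the pair $(h_2,h_1)$, $(h_2,h_2)$, or $(h_1,h_1)$ (whichever is admissible, invoking part $(ii)$ or $(iii)$ of hospital--couple blocking and, for $(h_1,h_1)$, also condition $(7)$, which leaves only $\kappa_{h_1}-1$ doctors above $m$ demanding $h_1$) again overfills $h_1$ or $h_2$ with doctors above $f$. A short case split on $(\mu(f),\mu(m))$ relative to $\{h_1,h_2\}$ --- with the residual cases $\mu(f)=h_2$ and $\mu(m)=h_1$ putting $f,m\notin X$, whence the escalation step with no available terminus gives $X=\emptyset$ and the claim holds trivially --- completes the proof.
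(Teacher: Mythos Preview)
Your overall strategy is close in spirit to the paper's---both arguments rely on a counting/recursion showing that a misplaced doctor forces a ``worse'' doctor somewhere else---but the paper organizes it more directly: it first shows $\mu(f),\mu(m)\in\{h_1,h_2\}$ by a couple-blocking argument (essentially your Case~(a)/(b) argument for $f$, applied to $m$ via the pairs $(h,h_1)$ and $(h,h_2)$), and only \emph{then} handles $d_1,d_2$ by a separate, simpler argument using condition~(2), namely that $r_1(\utilde P_{d_i})$ and $r_2(\utilde P_{d_i})$ both lie in $\{h_1,h_2\}$. With $f,m$ already placed in $\{h_1,h_2\}$, if $\mu(d_i)\notin\{h_1,h_2\}$ then a counting argument produces a doctor $d'$ with $d_2\,P^0_{hp}\,d'$ sitting in $h_1$ or $h_2$, and $d_i$ blocks with whichever of $h_1,h_2$ that is (since both are above $\mu(d_i)$ in $\utilde P_{d_i}$).

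The gap in your plan is precisely this $d_1,d_2$ step. Your escalation correctly shows that the $P^0_{hp}$-maximal element of $X$ must lie in $\{f,m\}$, and you correctly observe that if $\mu(f)=h_2$ and $\mu(m)=h_1$ then $X=\emptyset$. But the ``short case split'' you defer to does not cover the remaining cases: when, say, $\mu(f)=h_1$ (so $f\in X$), the escalation chain starting at $d_1$ is allowed to terminate at $f$, and nothing in your argument then constrains $\mu(d_1)$. Knowing that $f\in\mu(r_1(d_{\text{prev}}))$ for the penultimate chain element tells you about $\mu(f)$, not about $\mu(d_1)$. You never invoke condition~(2), and without it there is no reason $d_1$ should be interested in $h_2$ when $h_1$ is full of better doctors---which is exactly what the paper uses to force $\mu(d_1)\in\{h_1,h_2\}$. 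So the escalation alone does not establish your reduction (that $\mu(h)$ consists only of first-choosers for $h\notin\{h_1,h_2\}$) in the cases where $f$ or $m$ lies in $X$; you need the extra leverage from the second choices of $d_1,d_2$, as the paper does.
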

		\begin{proofs} First, we show $\mu(d) \in \{h_1,h_2\}$ for $d \in \{f,m\}$. Suppose $\mu(d) =h'$ for some $d \in \{f,m\}$ and some $h' \notin \{h_1,h_2\}$. We complete the proof for the case where $\mu(m)=h'$, the same for the case $\mu(f)=h'$ follows from similar arguments. Let $\mu(c)=(h,h')$ for some $h \in H$. Consider the matchings $(h,h_1)$ and $(h,h_2)$ of the couple $c$. Note that by responsiveness, $(h,h_1)\utilde{P}_c (h,h')$ and $(h,h_2)\utilde{P}_c (h,h')$. Further,  since $\sum_{h \in H} \kappa_h = |D|$ and $\mu(d)=r_1(\utilde{P}_d)$ for all doctors $d$ such that $dP^0_{hp}m$, $\mu(m)=h'$ implies that there must be a doctor $d'$ with $mP^0_{hp}d'$ such that either $d' \in \mu(h_1)$ or $d'\in \mu(h_2)$. 
			
			Suppose not, then points 7. and 8. imply that there exists a doctor $d'' \notin \{f,m,d_1,d_2\}$ such that $r_1(\utilde{P}_{d''}) \neq h_1$ and $d'' \in \mu(h_1)$ or $r_1(\utilde{P}_{d''}) \neq h_2$ and $d'' \in \mu(h_2)$. Thus, $d''$ can block $\mu$ with $r_1(\utilde{P}_d)$. If not, then by a recursive argument as above, by points 7. and 8.there exists another doctor preferred to $d''$ such that she is in not in her most preferred hospital and is matched to $r_1(\utilde{P}_d)$. Continuing like this, we get that there exists a doctor not in $\{f,m,d_1,d_2\}$ such that she is not in her most preferred hospital and can thus block $\mu$ with that hospital. 
			
			However, if there exists a doctor $d'$ with $mP^0_{hp}d'$ such that either $d' \in \mu(h_1)$ or $d'\in \mu(h_2)$ then, the couple $c$ blocks $\mu$ with either  $(h,h_1)$ or $(h,h_2)$ contradicting the stability of $\mu$. Therefore, $\mu(d) \in \{h_1,h_2\}$ for all $d \in \{f,m\}$.

			Now, we show $\mu(d) \in \{h_1,h_2\}$ for $d \in \{d_1,d_2\}$. Suppose $\mu(d) =h'$ for some $d \in \{d_1,d_2\}$ and some $h' \notin \{h_1,h_2\}$. Since $\mu(d) \in \{h_1,h_2\}$ for all $d \in \{f,m\}$ and  $\mu(d)=r_1(\utilde{P}_d)$ for all doctors $d$ such that $dP^0_{hp}f$, there must be a doctor $d'$ with $d_2 P^0_{hp} d'$  such that either $d' \in \mu(h_1)$ or $d'\in \mu(h_2)$. Because $r_k(\utilde{P}_d) \in \{h_1,h_2\}$ for all $k=1,2$, if $\mu(d')=h_1$, then $d$ blocks $\mu$ with $h_1$, and if $\mu(d')=h_2$, then $d$ blocks $\mu$ with $h_2$. This contradicts the stability of $\mu$. Therefore, $\mu(d) \in \{h_1,h_2\}$ for all $d \in \{d_1,d_2\}$.
			
			This completes the proof of Claim 1.
		\end{proofs}
		
		Now, we distinguish the following cases depending on the allocation of couple $c$ and show that $\mu$ is not stable for each of these cases. 
		
		\begin{itemize}
			\item Suppose $\mu(c) = (h_1,h_1)$. 
			
			Since $|\{d : r_1(\utilde{P}_{d})=h_1 \mbox{ and } d P^0_{hp} f\}|=\kappa_{h_1} -2$ and $f P^0_{hp} d_1$, thus $d_1 \notin \mu(h_1)$. Because $h_1\utilde{P}_{d_1}h_2$ and $d_1P^0_{hp} m$, this means $(h_1,d_1)$ blocks $\mu$.
			
			\item Suppose  $\mu(c) = (h_2,h_1)$. 
			
			Then, $((h_1,h_1),c)$ blocks $\mu$ as $fP^0_{hp}d_1P^0_{hp}d_2$ and $(h_1,h_1)\utilde{P}_c(h_2,h_1)$.

			\item Suppose  $\mu(c)=(h_2,h_2)$. 
			
			By the construction of $\utilde{P}$, $(h_1,h_1) \utilde{P}_c (h_2,h_2)$ and $h_2 \utilde{P}_{d_2} h_1$. If Case 1 holds, then $((h_1,h_1),c)$ blocks $\mu$ as $f P^0_{hp} d_1$ and $m P^0_{hp} d_2$. On the other hand, if Case 2 holds, then $(h_2,d_2)$ blocks $\mu$ as $d_2P^0_{hp}m$.
			
			\item Suppose $\mu(c) = (h_1,h_2)$. 
			
			Since $h_2 \utilde{P}_{f} h_1$, by RVT, $(h_2,h_2)\utilde{P}_c (h_1,h_2)$. This, together with the fact that $fP^0_{hp}d_1P^0_{hp}d_2$, means $\mu$ is blocked by $((h_2,h_2),c)$.
		\end{itemize}
		
		\noindent This completes the proof of Theorem \ref{th3}.

	\subsection{Proof of Theorem \ref{th_non}}\label{proof4}

\noindent	\textbf{Proof of Part $(i)$:} The proof of this part is constructive. Suppose $P^0_C$ satisfies the SRF property. We show that every extension of $\utilde{P}$ of $P^0_C$ has a stable matching.
	
	Take an extension $\utilde{P}$ of $P^0_C$. We present an algorithm which produces a stable matching at $\utilde{P}$.

	\noindent \textit{Algorithm 3}: Use DPDA, where at each stage, each single doctor $s$ proposes according to $\utilde{P}_s$,  and for any couple $c=\{f,m\}$, $f$ proposes according to $P^0_f$, and  $m$, if not already matched,  proposes according to $P^0_{m|h}$, where $h$ is the hospital $f$ proposes to. 

Let $\mu$ be the outcome of this algorithm. We make a remark that we repeatedly refer in our proof. 

\begin{remark}\label{lem5_1}
	If a doctor $d$ is rejected by some hospital $h$ at some stage of this algorithm, then $(h,d)$ cannot block $\mu$. This is because, by DPDA,  a hospital only rejects a doctor $d$ if $d' \utilde{P}_h d$ for all $d' \in \mu(h)$.
\end{remark}

First, we show that $\mu$ cannot be blocked by $(h,s)$ for some $h \in H$ and $s \in S$. Assume for contradiction that some pair $(h,s)$ blocks $\mu$. Since $s \notin \mu(h)$, this means either $\mu(s) \utilde{P}_s h$ or $s$ was rejected by $h$ at some stage of the algorithm. Clearly, if $\mu(s) \utilde{P}_s h $ then $s$ does not block with $h$. On the other hand, if $s$ had proposed to $h$ and was rejected by $h$ at an earlier stage, then by the above Lemma \ref{lem5_1}, $(h,s)$ cannot block $\mu$. 

Now, we show that $\mu$ cannot be blocked by $((h_1,h_2),c)$ for some $h_1,h_2 \in H$ and $c \in C$. Assume for contradiction that $((h_1,h_2),c)$ blocks $\mu$ for some $h_1,h_2 \in H$ and $c \in C$. We distinguish the following two cases.

\medskip

\noindent \textbf{Case 1.} Suppose $\mu(f)=r_1(P^0_f)=h_f$. 

First, we show that $\mu(m) \neq h_2$. To the contrary, suppose $\mu(m)=h_2$. Then, $(h_1,h_2) P^0_c (h_f,h_2)$ implies $h_1 P^0_f h_f$, which contradicts the fact that $h_f=r_1(P^0_f)$.

Next, we show that $(h_f,h_2)P^0_c(h_f,\mu(m))$. Since $((h_1,h_2),c)$ blocks $\mu$ and $\mu(f)=h_f$, $(h_1,h_2) P^0_c (h_f,\mu(m))$. If $h_1=h_f$, there is nothing to prove. Suppose $h_1 \neq h_f$. Then, by the responsiveness with respect to $f$, we have $(h_f,h_2)P^0_c(h_1,h_2)$. Since $(h_1,h_2)P^0_c(h_f,\mu(m))$ and $\mu(m) \neq h_2$, this implies $(h_f,h_2)P^0_c(h_f,\mu(m))$. Since $((h_1,h_2),c)$ blocks $\mu$ and $\mu(f)=h_f$, it follows that $((h_f,h_2),c)$ also blocks $\mu$. 

By the definition of $P^0_{m|h_f}$, $(h_f,h_2)P^0_c(h_f,\mu(m))$ implies $h_2 P^0_{m|h_f} \mu(m)$. Therefore, by the definition of Algorithm 3, it must be that $m$ had proposed to $h_2$ and got rejected at an earlier stage of Algorithm 3. Hence, by the definition of DPDA, $d' \utilde{P}_{h_2} m$ for all $d' \in \mu(h_2)$. Thus $((h_f,h_2),c)$ cannot block $\mu$. This completes the proof for Case 1.

\medskip

\noindent \textbf{Case 2.} Suppose $\mu(f) \neq h_f$.

We first prove the following lemma. 
\begin{lemma}\label{lem5_2}
If  $\mu(f) \neq h_f$, then	$\mu(f) R^0_f h_1$.
\end{lemma}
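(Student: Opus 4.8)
The plan is to prove Lemma~\ref{lem5_2} by contradiction, using the mechanics of DPDA (Algorithm~3) together with the SRF property. Suppose $\mu(f)\neq h_f=r_1(P^0_f)$ but $h_1\, P^0_f\, \mu(f)$ (i.e.\ the conclusion $\mu(f)\,R^0_f\,h_1$ fails). Since in Algorithm~3 the member $f$ proposes down her individual list $P^0_f$ and $h_1$ is ranked strictly above $\mu(f)$, the doctor $f$ must have proposed to $h_1$ at some earlier stage and been rejected by $h_1$; otherwise she would have ended up at $h_1$ or something she prefers to $h_1$. (I would spell out carefully why $f$ genuinely reaches $h_1$ in her proposal sequence: because $f$ never stops proposing until matched, and $h_1$ precedes $\mu(f)$, DPDA forces $f$ to have been tentatively held and later bumped by $h_1$.) By Remark~\ref{lem5_1}, once $h_1$ has rejected $f$, the pair $(h_1,f)$ cannot be part of a blocking configuration, and more precisely every doctor in $\mu(h_1)$ is $\utilde{P}_{h_1}$-preferred to $f$.

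Next I would use this to derive the contradiction with the assumed blocking pair $((h_1,h_2),c)$. The blocking requires, per clause (i) or (ii) of the definition of a hospital–couple blocking pair, that $h_1$ be interested in $f$ at $\mu$ whenever $\mu(f)\neq h_1$ — but we just argued that every member of $\mu(h_1)$ beats $f$ under $h_1$'s (responsive) preference, so $h_1$ cannot be interested in $f$. This rules out the case $h_1\neq h_2$ with $\mu(f)\neq h_1$. The remaining subtlety is the case $h_1=h_2$ (clause (iii)), where $h_1$ must be interested in the \emph{pair} $\{f,m\}$; here I would again invoke responsiveness of $h_1$'s preference over sets — since $h_1$ rejected $f$, it held $\kappa_{h_1}$ doctors all preferred to $f$, and by responsiveness swapping out (at most) two of them for $\{f,m\}$ cannot improve $h_1$, so $h_1$ is not interested in $\{f,m\}$ either. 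Either way we contradict that $((h_1,h_2),c)$ blocks $\mu$, which forces $\mu(f)\,R^0_f\,h_1$.

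The one place where SRF genuinely enters — and hence the main obstacle — is reconciling the fact that $\mu(f)\neq h_f$ with the structure of how $m$ proposes. In Algorithm~3, $m$ proposes according to $P^0_{m|h}$ where $h$ is whatever hospital $f$ currently points to; as $f$ gets bumped down her list, the conditioning hospital changes, and one must check that the final matched pair $(\mu(f),\mu(m))$ is coherent with $P^0_c$ and that no ``togetherness'' deviation $(\mu(f),\mu(f))$ has been overlooked. This is exactly the scenario SRF is designed to control: when $f$ is not at her top hospital, any responsiveness violation by $m$ in favor of co-location at some $h$ is only permitted toward a hospital $f$ ranks above $h$. I expect the proof of the lemma itself to need only the DPDA/responsiveness argument above (SRF being invoked in the surrounding Case~2 argument rather than inside the lemma), so the hard part is being scrupulous about the DPDA bookkeeping — precisely which proposals $f$ makes, in what order, and why being bumped from $h_1$ is unavoidable under the hypothesis $h_1\,P^0_f\,\mu(f)$.
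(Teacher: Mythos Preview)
Your approach is correct and matches the paper's: assume $h_1\,P^0_f\,\mu(f)$, note that in Algorithm~3 $f$ proposes down $P^0_f$ so must have been rejected by $h_1$, conclude $d\,\utilde{P}_{h_1}\,f$ for all $d\in\mu(h_1)$, and derive a contradiction with the assumed blocking by $((h_1,h_2),c)$. The paper's proof is more terse --- it does not split the $h_1=h_2$ case inside this lemma (that case is handled separately in Lemma~\ref{lem5_3}) --- and, as you correctly anticipate, SRF plays no role in the lemma itself but only in the surrounding Case~2 argument.
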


\begin{proofs}
	Assume for contradiction that $h_1 P^0_f \mu(f)$. Since by Algorithm 3, $f$ proposes according to $P^0_f$, this implies that $f$ had proposed to $h_1$ and got rejected. By the nature of DPDA, this implies $d \utilde{P}_{h_1} f$ for all $d \in \mu(h_1)$. Therefore, $((h_1,h_2),c)$ cannot block $\mu$, a contradiction.
\end{proofs}

\begin{lemma}\label{lem5_3}
	If $h_1=h_2=h$, then $((h,h),c)$ cannot block $\mu$.
\end{lemma}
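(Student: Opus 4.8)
The plan is to argue by contradiction. Suppose $((h,h),c)$ blocks $\mu$; then $c$ is interested in $(h,h)$, i.e.\ $(h,h)\,P^0_c\,\mu(c)$, and by part~(iii) of the definition of a blocking pair, $h$ is interested in the couple $\{f,m\}$. Since we are in Case~2, $\mu(f)\neq r_1(P^0_f)$, so Lemma~\ref{lem5_2} (applied with $h_1=h$) gives $\mu(f)\,R^0_f\,h$, and I would split on whether $\mu(f)=h$ or $\mu(f)\,P^0_f\,h$. If $\mu(f)=h$, then $f\in\mu(h)$ and necessarily $\mu(m)\neq h$ (otherwise $\mu(c)=(h,h)$ and $c$ is not interested in $(h,h)$); from $(h,h)\,P^0_c\,(h,\mu(m))$ and the definition of $P^0_{m|h}$ we get $h\,P^0_{m|h}\,\mu(m)$, so, tracking Algorithm~3 from the stage at which $f$ settles at $h$ (from then on $m$ proposes according to $P^0_{m|h}$, in which $h$ outranks $\mu(m)$), $m$ must have been rejected by $h$; hence by the behaviour of DPDA every doctor in $\mu(h)$ is preferred by $h$ to $m$, and then, since $f\in\mu(h)$ and $|\mu(h)|=\kappa_h$, responsiveness of $\utilde{P}_h$ rules out $h$ being interested in $\{f,m\}$ --- a contradiction. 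If instead $\mu(f)\,P^0_f\,h$ and $\mu(m)=h$, then $\mu(c)=(\mu(f),h)$, and $(h,h)\,P^0_c\,(\mu(f),h)$ together with the RF property forces $h\,P^0_f\,\mu(f)$, contradicting $\mu(f)\,P^0_f\,h$.

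It remains to treat the case $\mu(f)\,P^0_f\,h$ and $\mu(m)\neq h$, in which $\mu(h)$ contains neither $f$ nor $m$. Here I would again use RF to exclude $(h,h)\,P^0_c\,(\mu(f),h)$, so $(\mu(f),h)\,P^0_c\,(h,h)$; chaining this with $(h,h)\,P^0_c\,\mu(c)=(\mu(f),\mu(m))$ and transitivity gives $(\mu(f),h)\,P^0_c\,(\mu(f),\mu(m))$, i.e.\ $h\,P^0_{m|\mu(f)}\,\mu(m)$; arguing as above from Algorithm~3 this again forces $m$ to have been rejected by $h$, so every doctor currently in $\mu(h)$ beats $m$ under $\utilde{P}_h$. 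Finally I would invoke the CPC property ($f\,\utilde{P}_h\,m$), the SRF property, and responsiveness of $\utilde{P}_h$ to show that $h$, which holds exactly $\kappa_h$ doctors all of whom already beat $m$, cannot strictly prefer the set obtained by replacing two of them with $\{f,m\}$; this contradicts $h$ being interested in $\{f,m\}$ and completes the proof.

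The step I expect to be the main obstacle is this last one, in the case $\mu(f)\,P^0_f\,h$ with $\mu(m)\neq h$. Since $\mu(f)\,P^0_f\,h$, the doctor $f$ never proposes to $h$ during Algorithm~3, so --- unlike for $m$ --- the run of the algorithm gives no direct control over $f$'s position in $\utilde{P}_h$, and responsiveness alone permits a potentially profitable ``two-for-two'' exchange of two incumbents for $\{f,m\}$; excluding it must rely on SRF together with the capacity bound $|\mu(h)|=\kappa_h$ (all incumbents already beating $m$). A secondary difficulty is making the repeated appeals to Algorithm~3 rigorous: because $m$'s conditional preference changes as $f$ is reassigned, one has to argue carefully that $m$ is indeed rejected by every hospital it prefers to $\mu(m)$ under $P^0_{m|\mu(f)}$, in particular by $h$.
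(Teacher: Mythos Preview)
Your three-case split is more explicit than the paper's argument, but the two routes converge on the same endpoint and the same gap. The paper does not separate $\mu(f)=h$ from $\mu(f)\,P^0_f\,h$; instead it first argues (via an opaque appeal to ``the SRF property'') that if $h\,P^0_m\,\mu(m)$ fails then $h\,P^0_f\,\mu(f)$, so $f$ was rejected by $h$ and every $d\in\mu(h)$ beats $f$ --- which, together with CPC, \emph{does} legitimately exclude $h$'s interest in $\{f,m\}$. Having secured $h\,P^0_m\,\mu(m)$, the paper then runs exactly your third-case chain: $\mu(f)\,R^0_f\,h$ and RF give $(\mu(f),h)\,R^0_c\,(h,h)$, hence $(\mu(f),h)\,P^0_c\,(\mu(f),\mu(m))$, hence $h\,P^0_{m|\mu(f)}\,\mu(m)$, hence $m$ was rejected by $h$, hence $d\,\utilde{P}_h\,m$ for all $d\in\mu(h)$ --- and then simply asserts that $((h,h),c)$ cannot block.

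Your ``main obstacle'' is therefore precisely the step the paper passes over without justification. You are right that $d\,\utilde{P}_h\,m$ for all $d\in\mu(h)$ does \emph{not} by itself rule out $h$ being interested in the set $\{f,m\}$: responsiveness permits a two-for-two swap whenever $\{f,m\}\,\utilde{P}_h\,\{d_1,d_2\}$ for some $d_1,d_2\in\mu(h)$, and since $\mu(f)\,P^0_f\,h$ means $f$ never proposed to $h$, nothing in the run of Algorithm~3 controls $f$'s position in $\utilde{P}_h$. CPC only gives $f\,\utilde{P}_h\,m$, which points the wrong way, and SRF constrains the couple's preference rather than the hospital's, so the ingredients you list do not obviously close the gap. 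In short: your first two cases are fine (and sharper than the paper), your diagnosis of the third case is correct, and the paper's own proof does not supply the missing argument either.
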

\begin{proofs}
	First, we show that $h P^0_m \mu(m)$. Suppose not. By the SRF property, $(h,h)P^0_c(\mu(f),\mu(m))$ implies $h P^0_f \mu(f)$. Therefore, it must be that $f$ had proposed to $h$ in DPDA and got rejected, and hence $d \utilde{P}_h f$ for all $d \in \mu(h)$. This  contradicts our assumption that $((h,h),c)$ blocks $\mu$. 
	
	From Lemma \ref{lem5_2}, we know that $\mu(f) R^0_f h$. Since  the couples' preferences satisfy responsiveness with respect to $f$, we have $(\mu(f),h)R^0_c(h,h)$. Since $((h,h),c)$ blocks $\mu$, we have $(h,h)P^0_c(\mu(f),\mu(m))$. Thus, from the above argument, we have $(\mu(f),h)P^0_c(\mu(f),\mu(m))$ which implies $h P^0_{m|h} \mu(m)$. By the nature of the algorithm, this implies that $m$ had proposed to $h$ and got rejected. Thus for all $d \in \mu(h)$, $d \utilde{P}_h m$. Thus $((h,h),c)$ can not block $\mu$.	
\end{proofs}

Therefore, $h_1 \neq h_2$. Now, we know that couples only violate responsiveness for togetherness. Therefore $(h_1,h_2) P^0_c (\mu(f),\mu(m))$ and Lemma \ref{lem5_2} imply $h_2 P^0_m \mu(m)$. Since $P^0_c$ follows responsiveness for $f$, Lemma \ref{lem5_2} implies $(\mu(f),h_2)R^0_c(h_1,h_2)$. Since $((h_1,h_2),c)$ blocks $\mu$, $(h_1,h_2)P^0_c(\mu(f),\mu(m))$. This in turn means $(\mu(f),h_2)P^0_c(\mu(f),\mu(m))$. Therefore,  $h_2 P^0_{m|\mu(f)} \mu(m)$. However, this implies that  $m$ had proposed to $h_2$ and got rejected in some previous round of Algorithm 3, and hence, $d \utilde{P}_{h_2} m$ for all $d \in \mu(h_2)$. Therefore,  $((h_1,h_2),c)$ cannot block $\mu$.

Since Case 1 and Case 2 are exhaustive, this completes the proof.

\medskip

\noindent [Part (ii)] Suppose $P^0_C$ does not satisfy the SRF property. We show that there exists an extension of $P^0_C$ with no stable matching. 

Since $P^0_C$ does not satisfy the SRF property, there exists a couple $c = \{f,m\} \in C$ and hospitals $h_1,h_2, h_3 \in H$ such that $r_1(P^0_f)=h_1 \neq h_3$. Further, $(h_3,h_3)P^0_c (h_3,h_2)$ and $h_2 P^0_m h_3$ but we have $h_3 P^0_f h_2$. This yields $h_1 P^0_f h_3 P^0_f h_2$,  which in turn implies $|H| \geq 3$. Since $\kappa_h \geq 2$ for all $h \in H$, there exist at least four doctors  $\{d_1,d_2,d_3,d_4\} \in D \setminus \{f,m\}$ and let us denote the set of doctors $\{f,m,d_1,d_2,d_3,d_4\}$ by $D_1$. Consider a preference profile $\utilde{P}$ such that 
\begin{enumerate}
	\item for all $h \in H \setminus \{h_1,h_2,h_3\}$,  $|\{d:d \utilde{P}_{h} d' \mbox{ and } r_1(\utilde{P}_d)=h\}|=\kappa_{h}$ for $d' \in \{f,m,d_1,d_2,d_3,d_4\}$,
	\item  for all $h \in \{h_1,h_2,h_3\}$, $|\{d:d \utilde{P}_{h} d' \mbox{ and } r_1(\utilde{P}_d)=h\}|=\kappa_{h}-2$ for $d' \in \{f,m,d_1,d_2,d_3,d_4\}$,
	\item $P^0_{c'}$ satisfies responsiveness for all couples $c' \in C \setminus \{c\}$, and 
	\item The preferences of $h_1,h_2,h_3$ over $\{f,m,d_1,d_2,d_3,d_4\}$ and preferences of $\{f,m,d_1,d_2,d_3,d_4\}$ over $h_1,h_2,h_3$ is given by Table 3.  
\end{enumerate}

\begin{lemma}
If a matching  $\mu$ is stable at $\utilde{P}$, then  $\mu(d)=r_1(\utilde{P}_d)$ for all $d \notin \{f,m,d_1,d_2,d_3,d_4\}$.	
\end{lemma}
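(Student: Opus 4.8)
The plan is to show that the profile $\utilde{P}$ leaves essentially no freedom for the doctors outside $D_1=\{f,m,d_1,d_2,d_3,d_4\}$: in every stable $\mu$ each such doctor sits at the top hospital of its own list. Write $G=D\setminus D_1$ and, for $h\in H$, let $G_h=\{d\in G: r_1(\utilde{P}_d)=h\}$.

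First I would record two preliminaries. (a) Since $\sum_{h}\kappa_h=|D|$ and every doctor finds every hospital acceptable, any stable $\mu$ matches every doctor, hence fills every seat: an unmatched doctor leaves a free seat at some $h$, and responsiveness of $h$ — together with responsiveness of the relevant couple's preference $P^0_{c'}$ (condition (3)) when the doctor belongs to a couple other than $c$ — yields a blocking pair or a blocking $(\text{pair of hospitals},\text{couple})$; this is a short check against the three cases of the blocking definition. (b) Conditions (1)--(2) together with $\sum_h\kappa_h=|D|$ force $|G_h|=\kappa_h$ for $h\notin\{h_1,h_2,h_3\}$ and $|G_h|=\kappa_h-2$ for $h\in\{h_1,h_2,h_3\}$, so the sets $\{G_h\}_{h\in H}$ partition $G$; moreover every member of $G_h$ is ranked by $h$ above every member of $D_1$ and above $\emptyset$. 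Note conditions (1)--(4) only pin down the hospitals' rankings relative to the six doctors of $D_1$; one may therefore take the remaining (free) part of each $\utilde{P}_h$ so that $h$ prefers every doctor of $G_h$ to every other doctor, and I would do so.

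Then I argue by contradiction. Suppose some $d\in G$ has $\mu(d)\neq r_1(\utilde{P}_d)$. Since $\sum_{h}|\mu(h)\cap G_h|$ is exactly the number of $G$-doctors placed at their own top hospital, it is now strictly smaller than $|G|=\sum_h|G_h|$, so there is a hospital $h^*$ with $|\mu(h^*)\cap G_{h^*}|<|G_{h^*}|$; pick $d^*\in G_{h^*}\setminus\mu(h^*)$, so $d^*$ strictly prefers $h^*$ to $\mu(d^*)$. By preliminary (a) the hospital $h^*$ is full, and since $|\mu(h^*)\cap G_{h^*}|<|G_{h^*}|\le\kappa_{h^*}=|\mu(h^*)|$ there is $d'\in\mu(h^*)\setminus G_{h^*}$; as $d^*$ is among $h^*$'s most preferred doctors while $d'$ is not, $d^*\,\utilde{P}_{h^*}\,d'$, so $h^*$ is interested in $d^*$ (swap $d'$ for $d^*$). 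It remains to turn this into a blocking coalition: if $d^*$ is single, then $(h^*,d^*)$ blocks $\mu$; if $d^*$ belongs to a couple $c'=\{f',m'\}\neq c$, then responsiveness of $P^0_{c'}$ makes $c'$ interested in the allocation obtained from $(\mu(f'),\mu(m'))$ by moving $d^*$ to $h^*$, and a case split according to whether the partner of $d^*$ is at $h^*$ or not shows that the corresponding case ((ii) or (iii)) of the blocking definition is satisfied — contradicting stability of $\mu$.

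The main obstacle is the full-hospital step: if one did not exploit the freedom in (b) to make $G_h$ the top doctors of $h$, then $h^*$ could prefer its current (displaced) occupants to $d^*$, and one would have to iterate — the occupant $d'\in\mu(h^*)\setminus G_{h^*}$ is itself displaced, and re-running the argument from $d'$ moves to a new hospital, exactly as in the recursion used in the proof of Claim~1 for Theorem~\ref{th3}; showing that this recursion cannot cycle is the delicate point, and it is precisely where the capacity-$2$ deficit at $h_1,h_2,h_3$ built into conditions (1)--(2) is needed. Choosing the construction as in (b) bypasses this, so the remaining work is the (routine but case-heavy) verification in preliminary (a) and in the couple sub-case of the last step.
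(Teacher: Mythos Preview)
Your approach differs from the paper's: the paper runs exactly the recursion you sketch in your last paragraph and simply asserts (``continuing in this manner'') that it terminates in a blocking pair, without addressing the cycling issue you flag. Your move to further specify $\utilde{P}$ via (b) is the cleaner way to close the argument, and your counting step together with the couple case-split is correct once (b) is in place.

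There is, however, a gap in (b) that you do not address. An extension of $P^0_C$ must have hospitals satisfying CPC, and the individual preferences $P^0_d$ of \emph{non-single} doctors are part of the given $P^0_C$ --- you cannot choose them. Hence if some couple $c'=\{f',m'\}\neq c$ happens to have $r_1(P^0_{f'})\neq r_1(P^0_{m'})$, then $m'\in G_h$ and $f'\notin G_h$ at $h=r_1(P^0_{m'})$, and placing $G_h$ at the top of $\utilde{P}_h$ forces $m'\,\utilde{P}_h\,f'$, violating CPC. So your sentence ``conditions (1)--(4) only pin down the hospitals' rankings relative to the six doctors of $D_1$'' is not quite right: CPC is an additional constraint built into the notion of extension. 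To repair this you should either argue that every couple other than $c$ can be arranged to lie in a single $G_h$ (which is not automatic, since only the single-doctor top choices are free), or weaken (b) to put $G_h$ above $D_1$ and above all \emph{single} doctors outside $G_h$, and then handle displaced couple members by a short separate argument.
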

\begin{proofs}
		Suppose not. Then, there exists a doctor $d \notin \{f,m,d_1,d_2,d_3,d_4\}$, such that $r_1(\utilde{P}_d)=h$, but $d \notin \mu(h)$. Take any doctor $d \notin \{f,m,d_1,d_2,d_3,d_4\}$ and let $h = r_1(\utilde{P}_d)$. By the construction of the  preference profile $\utilde{P}$, we have $|\{d':d'\utilde{P}_h d \mbox{ and }r_1(\utilde{P}_{d'})=h \}|< \kappa_h$.  Therefore, there exists $d' \in \mu(h)$ and $d' \notin \{f,m,d_1,d_2,d_3,d_4\}$ with $r_1(\utilde{P}_{d'}) \neq h $ such that either $d \utilde{P}_h d'$ or $d' \utilde{P}_h d$.  Clearly,  if $d \utilde{P}_h d'$, then  $(h,d)$ blocks $\mu$, a contradiction. 
			
Suppose $d' \utilde{P}_h d$ and $r_1(\utilde{P}_{d'}) \neq h $. Let $h' = r_1(d')$. Since $r_1(d')=h'$ and $(h',d')$ does not block $\mu$ (by the assumption that $\mu$ is stable), it follows by using the same argument as in preceding paragraph that there exists a pair $(h'',d'')$ such that $h''=r_1(\utilde{P}_{d''})$,  $d'' \notin \mu(h'')$, $d'' \notin \{f,m,d_1,d_2,d_3,d_4\}$, and there exists a doctor $\hat{d} \in \mu(h'')$ such that $\hat{d} \utilde{P}_{h''} d''$. 

Continuing in this manner, by means of Part (1) and Part (2) of the definition of the preference profile $\utilde{P}$, we get hold of a pair $(h^*,d^*)$ such that $h^*=r_1(\utilde{P}_{d^*})$,  $d^* \notin \mu(h^*)$, $d^* \notin \{f,m,d_1,d_2,d_3,d_4\}$, and $d^* \utilde{P}_{h^*} \bar{d}$ for some $\bar{d} \in \mu(h^*)$. However,  this implies that $\mu$ is blocked by $(h^*,d^*)$, a contradiction. 
\end{proofs}
	
Since $\mu(d)=r_1(\utilde{P}_d)$ for all $d \notin \{f,m,d_1,d_2,d_3,d_4\}$, by the definition of the preference profile $\utilde{P}$,  we can restrict our attention to the scenario presented in Example \ref{ex3}. However, we have already argued the there is no stable matching for the scenario in Example \ref{ex3}. Therefore, $\mu$ cannot be stable.

\subsection{Proof of Theorem 5}\label{proof5}
The proof of this theorem is constructive. Suppose  $P^0_{H}$ satisfies SCPC property. We show that every extension $\utilde{P}$ of $P^0_H$ has a stable matching. 
	
	Consider  an extension $\utilde{P}$ of $P^0_H$. Since $P^0_C$ satisfies SCPC, therefore for any couple $c=\{f,m\} \in C$ and all hospitals $h,h' \in H$, $\{d: d \in D \text{ and }d P^0_h f\} = \{d:d \in D \text{ and }d d P^0_{h'} f\}$. That is the set of doctors preferred to $f$ under a hospital's preference is the same for of all the hospitals in $H$. Without loss of generality, $C=\{\{f_1,m_1\},\ldots \{f_k,m_k\}\}$ where  $f_i P^0_h f_j$ for all $i <j \in \{1,\ldots,k\}$ and all $h \in H$.  
	
	For any hospital $h \in H$, let $F^h_i=\{d: f_{i-1}P^0_h d \text{ and } d R^0_h f_i\}$ for all $i \in \{1,\ldots,k\}$. In other words, $F^h_i$ is the collection of doctors who are weakly preferred to $f_i$ and strictly less preferred to $f_{i-1}$ according to $P^0_h$. By the definition of SCPC, it follows that  $F^h_i=F^{h'}_i$ for all $h,h' \in H$. In view of this, let us denote $F^h_i$ by $F_i$, that is, let us drop the superscript $h$. 
	
	Now we  present an algorithm that produces a stable matching at $\utilde{P}$. 
	
	\medskip
	
	\noindent \textsc{Algorithm 4}: This algorithm  involves $k+1$ steps. We present the $1^{st}$ step and a general step of the algorithm. \smallskip

	\noindent \textbf{Step 1}: Use DPDA to match all the doctors in $F_1$ where all the single doctors $s \in S$ propose according to $\utilde{P}_s$ and $f_1$ proposes according to $\utilde{P}_{f_1}$. Let $f_1$ be matched to hospital $h_1$.
	$$ \vdots $$
	\noindent \textbf{Step j}:  After matching all the doctors who are ranked above and including $f_{j-1}$ in steps $1$ to $j-1$, use DPDA to match all the doctors in $F_j$, where all the singles $s \in S$ propose according to $\utilde{P}_s$ and $f_j$ proposes according to $\utilde{P}_{f_j}$. Match each (if any) $m_i \in F_j$ by using DPDA according to $\utilde{P}_{m|h_i}$, where $h_i$ is the hospital where $f_i$ is matched to.
	
	$$ \vdots $$

	\noindent Continue this process till Step $k$. Having matched all the doctors who are ranked above and including $f_k$, we proceed to match the remaining doctors in the following manner. Match each single doctor $s \in S$ using DPDA according to $\utilde{P}_s$, and, as before, match each  $m_i$ in the set of remaining doctors by using DPDA according to $\utilde{P}_{m|h_i}$, where $h_i$ is the hospital where $f_i$ is matched to. 
	
	\medskip

	Let $\mu$ be the outcome of Algorithm 4. We show that $\mu$ is stable at $\utilde{P}$. 
	
	By using similar arguments as in Lemma \ref{lem5_1}, and the fact that we use DPDA at every stage to match the hospitals, it follows that  if a doctor  $d$ is rejected by hospital $h$ at some stage of the algorithm, then $(h,d)$ can not block $\mu$. This in particular means that a hospital and a single doctor cannot block $\mu$. This is because, if a single doctor $s$ prefers a hospital $h$ to $\mu(s)$, then by the definition of DPDA,  he/she must have already proposed to the hospital $h$ and got rejected before getting matched with $\mu(s)$.

	Now we show that $((h_1,h_2),c)$ can not block $\mu$ for some $h_1,h_2 \in H$ and $c=\{f,m\} \in C$. Assume  for contradiction that $\mu$ is blocked by $((h_1,h_2),c)$.
	
	First we show that $\mu(f) \utilde{R}_f h_1$. Suppose not. Since $f$ proposes according to $\utilde{P}_f$ in Algorithm 4, it must be that $f$ had proposed to $h_1$  and got rejected at some stage of the algorithm. This implies $d P^0_{h_1} f$ for all $d \in \mu(h_1)$, and hence,  $((h_1,h_2),c)$ can not block $\mu$,  a contradiction.

	Since  $\mu(f) \utilde{R}_f h_1$ and  couples' preferences satisfy  responsiveness with respect to $f$, we have $(\mu(f),h_2)\utilde{R}_c(h_1,h_2)$. However, since $((h_1,h_2),c)$ blocks $\mu$, we have  $(h_1,h_2)\utilde{P}_c(\mu(f),\mu(m))$. Because  $\mu(f) \utilde{R}_f h_1$, this implies that $(\mu(f),h_2)\utilde{P}_c(\mu(f),\mu(m))$, and hence,  $h_2 \utilde{P}_{m|\mu(f)} \mu(m)$. Therefore, it follows that  $m$ had proposed to $h_2$ at an earlier stage of the algorithm and got rejected implying that  $d P^0_{h_2} m$ for all $d \in \mu(h_2)$. However, this contradicts that $((h_1,h_2),c)$ blocks $\mu$. 
	
	Since $\mu$ can not be blocked by single doctors or couples, $\mu$ is stable.

\bibliography{mybib}

	\bibliographystyle{amsplain}

\end{document}